\newcommand{\sign}{\textup{sign}}
\newcommand{\UL}{\textup{U}}
\newcommand{\dist}{\textup{dist}}
\newcommand{\LL}{\operatorname{Ext}}
\newcommand{\Cov}{\textup{Cov}}
\newtheorem{counter}{}[section]
\newtheorem{theorem}[counter]{Theorem}
\newtheorem{definition}[counter]{Definition}
\newtheorem{lemma}[counter]{Lemma}
\newtheorem{claim}[counter]{Claim}
\def\orient{\operatorname{Orient}}
\renewcommand{\S}{\mathcal{S}}
\renewcommand{\P}{\mathbb{P}}
\newcommand\xleftrightarrow[2][]{%
  \ext@arrow 9999{\longleftrightarrowfill@}{#1}{#2}}
\newcommand\longleftrightarrowfill@{%
  \arrowfill@\leftarrow\relbar\rightarrow}
\newenvironment{proof}[1][\protect\proofname]{\par
\normalfont\topsep6\p@\@plus6\p@\relax
\trivlist
\itemindent\parindent
\item[\hskip\labelsep\scshape #1]\ignorespaces
}{%
\endtrivlist\@endpefalse
}
\providecommand{\proofname}{Proof}
\theoremstyle{plain}
\theoremstyle{plain}
\def\shrinkage{-2.4mu}
\def\vecsign#1{\rule[1.388\LMex]{\dimexpr#1-2.5pt}{.36\LMpt}%
  \kern-6.0\LMpt\mathchar"017E}
\def\dvecsign#1{\rule{0pt}{7\LMpt}\smash{\stackon[-1.989\LMpt]{%
  \SavedStyle\mkern-\shrinkage\vecsign{#1}}%
  {\rotatebox{180}{$\SavedStyle\mkern-\shrinkage\vecsign{#1}$}}}}
\def\dvec#1{\ThisStyle{\setbox0=\hbox{$\SavedStyle#1$}%
  \def\useanchorwidth{T}\stackon[-4.2\LMpt]{\SavedStyle#1}{\,\dvecsign{\wd0}}}}
\providecommand{\propositionname}{Proposition}
\providecommand{\theoremname}{Theorem}
\title{Rarity of extremal edges in random surfaces and other theoretical applications of cluster algorithms}
\author{Omri Cohen-Alloro\thanks{Supported by ISF grant~861/15 and by ERC starting grant 678520 (LocalOrder). School of Mathematical Sciences, Tel Aviv
University, Tel Aviv 69978, Israel. Emails: omrialloro@gmail.com,
peledron@post.tau.ac.il} \and Ron Peled\footnotemark[1] }
\begin{document}
\maketitle
\begin{abstract}
Motivated by questions on the delocalization of random surfaces, we prove that random surfaces satisfying a Lipschitz constraint rarely
develop extremal gradients. Previous proofs of this fact relied on
reflection positivity and were thus limited to random surfaces defined on
highly symmetric graphs, whereas our argument applies to general graphs. Our proof makes use of a cluster algorithm and reflection transformation for random surfaces of the type introduced by Swendsen-Wang, Wolff and Evertz et al. We discuss the general framework for such cluster algorithms, reviewing several particular cases with emphasis on their use in obtaining theoretical results. Two additional applications are presented: A reflection principle for random surfaces and a proof that pair correlations in the spin $O(n)$ model have monotone densities, strengthening
Griffiths' first inequality for such correlations.
\end{abstract}

\section{Introduction }\label{sec:introduction}
Our purpose in this paper is two-fold. First, we consider random
surface models satisfying a Lipschitz constraint, that is, random surfaces
whose gradients are constrained to be at most 1. For such surfaces
we prove that extremal gradients (close to 1 in magnitude) are very
unlikely to occur on any given set of edges. This is established for
all Lipschitz random surface models whose interaction potential is
monotone. The question of controlling the extremal gradients of
random surfaces was explicitly asked in \cite[Section
6]{milos2015delocalization} where such control was a key ingredient
in proving that Lipschitz (and more general) random surfaces
delocalize in two dimensions. Such a control was achieved in
\cite{milos2015delocalization} via the use of reflection positivity
(through the chessboard estimate) and as such was limited to random
surfaces defined on a torus graph. In contrast, our result applies
to random surfaces defined on an arbitrary, bounded degree, graph. New delocalization results may be obtained as a consequence as briefly discussed in Section~\ref{sec:discussion_and_open_questions}. Our proof makes use of a cluster algorithm and reflection transformation  for random surfaces of the type introduced by Swendsen-Wang \cite{swendsen1987nonuniversal}, Wolff \cite{wolff1989collective} (see also Brower
and Tamayo \cite{brower1989embedded}) and Evertz, Hasenbusch, Lana, Marcu, Pinn
and Solomon \cite{evertz1991stochastic, hasenbusch1992cluster}.

Our second goal is to discuss cluster algorithms of the above type in some generality. Such cluster algorithms, commonly used in Monte-Carlo simulation of the models, rely on finding a discrete, Ising-type, symmetry in the spin space
of the corresponding model (unlike the symmetries used in the reflection-positivity method which are symmetries of the underlying graph on which the model is defined). In Section~\ref{sec:reflection_transformations_cluster_algorithms} we discuss their general framework, reviewing in detail the cases of the Potts model, random surfaces, spin $O(n)$ model, Sheffield's cluster-swapping method and reversible Markov chains. Our review emphasizes the use of the algorithms in obtaining theoretical results and we demonstrate such use in two additional applications whose proof via the algorithms is relatively straightforward: A reflection principle for random surfaces and a proof that pair correlations in the spin $O(n)$ model have monotone densities, strengthening Griffiths' first inequality \cite{Gri67, Gin70} for such correlations.

We are by no means the first to discuss theoretical applications of cluster algorithms. Many such results are known in the literature including a work
of Aizenman \cite{aizenman1994slow}, following Patrascioiu and
Seiler \cite{patrascioiu1992phase}, on decay of correlations in
Lipschitz spin $O(2)$ models, a work of Burton and Steif \cite[Section 2]{burton1995new} on
characterizing the translation-invariant Gibbs states of a certain
subshift of finite type, works of Chayes-Machta \cite{chayes1997graphical, chayes1998graphical}, Chayes \cite{chayes1998discontinuity} and Campbell-Chayes \cite{campbell1998isotropic} relating phase transitions of spin systems with percolative properties of the graphical representation defined by their cluster algorithm, Sheffield's cluster swapping algorithm
\cite[Chapter~8]{sheffield2005random} used in the characterization
of translation-invariant gradient Gibbs states of random surfaces
(see also van den Berg \cite{van1993uniqueness} for a related
swapping idea used to study uniqueness of Gibbs measures) and a recent
work of Armend{\'a}riz, Ferrari and Soprano-Loto
\cite{armendariz2015phase} on phase transition in the dilute clock
model. However, these works mostly make use of ad-hoc transformations suitable to the task at hand and we feel that further emphasis of the unifying framework may still be of interest.

\subsection{Random surfaces}\label{sec:random_surfaces}
We begin by introducing the random surface model. Let $G = (V, E)$
be a finite connected graph (all our graphs will be simple,
undirected and without self-loops or multiple edges) and
$V_0\subseteq V$ be a non-empty subset of the vertices. Let $U$ be a
\emph{potential}, defined to be a measurable function
$U:\mathbb{R}\to(-\infty,\infty]$ satisfying $U(x)<\infty$ on a set
of positive Lebesgue measure and $U(x)=U(-x)$ for all $x$. The
random surface model with potential $U$, normalized to be $0$ at the
subset $V_0$, is the probability measure $\mu_{U,G,V_0}$ on
functions $\varphi:V\to\mathbb{R}$ defined by
\begin{equation}\label{eq:random_surface_measure}
  d\mu_{U, G, V_0}(\varphi) := \frac{1}{Z_{U, G, V_0}} \exp\Bigg(-\sum_{\{v,w\}\in E}
  U(\varphi_v - \varphi_w)\Bigg) \prod_{v\in V_0}\delta_0(d\varphi_v)\prod_{v\in V\setminus V_0}
  d\varphi_v,
\end{equation}
where $d\varphi_v$ denotes the Lebesgue measure on $\varphi_v$,
$\delta_0$ is a Dirac delta measure at $0$ and
\begin{equation*}
  Z_{U, G, V_0}:=\int \exp\Bigg(-\sum_{\{v,w\}\in E}
  U(\varphi_v - \varphi_w)\Bigg) \prod_{v\in V_0}\delta_0(d\varphi_v)\prod_{v\in V\setminus V_0}
  d\varphi_v
\end{equation*}
which we shall assume satisfies
\begin{equation}\label{eq:partition_function_condition}
  Z_{U, G, V_0} < \infty
\end{equation}
for $\mu_{U, G, V_0}$ to be well-defined (the fact that
$Z_{U,G,V_0}>0$ follows from our definition of potential).

For our applications we restrict attention to monotone potentials,
when $U$ satisfies
\begin{equation}\label{eq:monotonicity_condition_surfaces}
  U(x)\leq  U(y), \quad 0\leq x\leq y.
\end{equation}
This assumption implies that the density of a surface increases when
its gradients are decreased (in absolute value). In addition, we
often consider finitely-supported potentials in the sense that
\begin{align}
U(x)=\infty, \quad x>1.\label{eq:Lipschitz_condition_surfaces}
\end{align}
This assumption implies that a random surface configuration
$\varphi$ sampled from $\mu_{U,G,V_0}$ is a \emph{Lipschitz
function}, almost surely, in the sense that
\begin{equation}\label{eq:Lipschitz_function}
  |\varphi_v-\varphi_w|\le 1\quad\text{for all adjacent $v,w$.}
\end{equation}
We note that assumption \eqref{eq:partition_function_condition},
that $\mu_{U, G, V_0}$ is well-defined, is a consequence of
\eqref{eq:monotonicity_condition_surfaces} and
\eqref{eq:Lipschitz_condition_surfaces}.

An important example of a random surface satisfying
\eqref{eq:monotonicity_condition_surfaces} and
\eqref{eq:Lipschitz_condition_surfaces} is the case that $U$ is
given by the hammock potential,
\begin{equation}\label{eq:hammock_potential}
  U_{\text{hammock}}(x) = \begin{cases}
    0&|x|\le 1,\\
    \infty&|x|>1.
  \end{cases}
\end{equation}
In this case, the random surface is sampled uniformly among all
Lipschitz functions normalized to be~$0$ on $V_0$.

\subsubsection{Extremal gradients}
Our main result deals with random Lipschitz functions in the sense
of \eqref{eq:Lipschitz_function}. How rare are extremal gradients in
such surfaces, edges $\{v,w\}\in E$ on which $|\varphi_v -
\varphi_w|\ge 1-\varepsilon$ for some small $\varepsilon$? In
\cite[Section~6]{milos2015delocalization} it was asked whether,
under mild assumptions on the potential $U$, such gradients are
exponentially suppressed (`controlled gradients property') in the
sense that for each $\delta>0$ there exists $\varepsilon>0$,
depending only on $\delta$ and $U$ (not on the graph $G$), such
that for any distinct edges $\{\{v_i, w_i\}\}_{1\le i\le k}
\subseteq E$,
\begin{equation*}
  \mathbb{P}(|\varphi_{v_i} - \varphi_{w_i}|\ge 1-\varepsilon\text{ for all
  $i$})\le \delta^k.
\end{equation*}
A similar formulation was given for random surfaces with more
general potentials. The controlled gradients property was
established in \cite{milos2015delocalization}, for rather general
potential functions, when the graph $G$ is a torus in $\mathbb{Z}^d$
(a box with periodic boundary conditions), using reflection
positivity (via the chessboard estimate). This property was a key
ingredient in showing that two-dimensional random surfaces
delocalize for a large class of potential functions including the
hammock potential \eqref{eq:hammock_potential}. The work
\cite{milos2015delocalization} continues the delocalization results
of Brascamp, Lieb and Lebowitz \cite[Section
V]{brascamp1975statistical} and extends the class of potentials
treated there, so it is interesting to note that the arguments of
\cite{brascamp1975statistical} relied on a related property
\cite[inequality (16)]{brascamp1975statistical}. Our main result
establishes the controlled gradients property for Lipschitz random
surfaces with monotone potentials on general, bounded degree,
graphs.

\begin{theorem}\label{main_thm}
Let $G=(V,E)$ be a finite connected graph with maximal degree $\Delta$,
let $V_0\subseteq V$ be non-empty, let $U$ be a potential satisfying
\eqref{eq:monotonicity_condition_surfaces} and
\eqref{eq:Lipschitz_condition_surfaces} and let $\varphi$ be
randomly sampled from $\mu_{U,G,V_{0}}$. Then for any
$0<\varepsilon\le \frac{1}{8}$, $k\in\mathbb{N}$ and distinct
$\left\{ v_{1},w_{1}\right\} ,...,\left\{ v_{k},w_{k}\right\} \in
E$,
\begin{equation}\label{eq:main_result}
\mathbb{P}\left(\left\{
\left|\varphi_{v_{i}}-\varphi_{w_{i}}\right|\ge
1-\varepsilon\,:\,1\leq i\leq k\right\}
\right)\le\left(C(\Delta)\delta(U,\varepsilon)\right)^\frac{k}{C(\Delta)}
\end{equation}
where
\begin{equation*}
  \delta(U, \varepsilon):=\varepsilon\cdot \exp\left(-U(1-\varepsilon)+U(0)+\Delta\left(U\left(\frac{3}{4}\right) - U(0)\right)\right),
\end{equation*}
and where $C(\Delta)$ depends only on $\Delta$.
\end{theorem}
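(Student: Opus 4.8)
The plan is to deduce \eqref{eq:main_result} from a single‑edge estimate: for a well‑separated sub‑collection of the given edges I would apply the reflection transformation associated with the cluster algorithm of Section~\ref{sec:reflection_transformations_cluster_algorithms} and read the bound off the resulting change of density. To pass to the sub‑collection, note that since $G$ has maximal degree $\Delta$, a greedy choice produces $S\subseteq\{1,\dots,k\}$ with $|S|\ge k/C(\Delta)$ for which the edges $\{v_i,w_i\}$, $i\in S$, are pairwise at graph‑distance at least a fixed constant; since the event in \eqref{eq:main_result} is contained in the event that the edges indexed by $S$ are all extremal, it suffices to gain a factor $C(\Delta)\delta(U,\varepsilon)$ per such edge, working inside disjoint neighbourhoods so that distinct edges do not interfere.

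For a single edge $e=\{v,w\}$, suppose it is extremal, say $\varphi_v-\varphi_w\in[1-\varepsilon,1]$ by \eqref{eq:Lipschitz_function}, and set $t:=\varphi_w+\tfrac12$. Following Section~\ref{sec:reflection_transformations_cluster_algorithms}, at level $t$ one joins the two endpoints of an edge whenever reflecting exactly one of them about $t$ would create a gradient exceeding $1$ in absolute value --- which, by \eqref{eq:monotonicity_condition_surfaces}--\eqref{eq:Lipschitz_condition_surfaces}, is precisely the situation carrying zero weight; let $\mathcal C\ni v$ be the resulting cluster, further restricted to the component of $v$ inside $\{x:\varphi_x>t\}$. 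The transformation reflects $\varphi$ on $\mathcal C$ about $t$ and then applies an affine rescaling that spreads the (now at most $\varepsilon$) gradient along $e$ onto a unit‑length interval, propagating the induced shift through $\mathcal C$ so as to keep the surface Lipschitz. The Lipschitz constraint does the essential work: it forces $w\notin\mathcal C$ and every vertex of $\mathcal C$ to lie above $t$ --- this is where $\varepsilon\le\tfrac18$ enters, through $1-\varepsilon\ge\tfrac78$ --- so that $t$ is recoverable from the image and reflecting $\mathcal C$ is measure preserving; it makes the reflection, and then the rescaling, preserve \eqref{eq:Lipschitz_function}; and it ensures that $\mathcal C$ avoids $V_0$ (when $v,w\notin V_0$; if $v$ or $w\in V_0$ one shifts $t$ accordingly) and that every other extremal edge, having gradient exceeding $\tfrac34$ in absolute value, is joined inside a single cluster and hence is either contained in $\mathcal C$ and moved rigidly, so its gradient is unchanged, or left untouched.

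The bound then follows by accounting for the change of density. The pinning factors, and the potential terms on edges lying inside $\mathcal C$ or away from $\mathcal C$, are unchanged because $U$ is even; only $e$ and the edges across the boundary of $\mathcal C$ contribute. On $e$ the gradient falls from at least $1-\varepsilon$ to at most $\varepsilon\le\tfrac18$, gaining, via \eqref{eq:monotonicity_condition_surfaces}, a factor at least $\exp(U(1-\varepsilon)-U(\varepsilon))\ge\exp(U(1-\varepsilon)-U(0))\cdot\exp(-(U(3/4)-U(0)))$; each boundary edge of $\mathcal C$ leading to the side of $t$ opposite $\mathcal C$ can only gain, while each of the at most $\Delta$ boundary edges leading to the same side of $t$ has gradient at most $\tfrac34$ after the move and so costs at most $\exp(U(3/4)-U(0))$; and the Jacobian of the rescaling equals $\varepsilon^{-1}$. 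Assembling these bounds for the density ratio and the Jacobian gives $C(\Delta)\delta(U,\varepsilon)$ for the probability that $e$ is extremal, conditionally on the configuration away from its neighbourhood; iterating over $i\in S$, whose transformations live in disjoint neighbourhoods, yields $(C(\Delta)\delta(U,\varepsilon))^{|S|}\le(C(\Delta)\delta(U,\varepsilon))^{k/C(\Delta)}$, which is \eqref{eq:main_result}.

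The step I expect to be the main obstacle is arranging the cluster $\mathcal C$ so as to reconcile its conflicting requirements: it must be large enough that the reflection (and the rescaling) preserve the Lipschitz constraint, yet its boundary must carry only an $O(\Delta)$ energy cost with post‑move gradients bounded by $\tfrac34$, and it must remain localized enough that well‑separated edges are genuinely handled independently. Treating the interplay of the reflection with the pinning at $V_0$, and verifying that the composed map is injective with Jacobian $\varepsilon^{-1}$, are the remaining technical points.
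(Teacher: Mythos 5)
Your outline (dilute to a sparse sub-collection, then apply a cluster/reflection transformation near a single extremal edge and account for the density change, iterating) is in the right spirit, but several of the steps that you defer or assert without justification are precisely the ones that carry the real difficulty, and as stated they fail. The most consequential is the claim that ``every other extremal edge\ldots is joined inside a single cluster and hence is either contained in $\mathcal C$ and moved rigidly\ldots or left untouched.'' With $t=\varphi_w+\tfrac12$ this is false: an extremal edge $\{x,y\}$ can straddle the level $t$ (e.g.\ $\varphi_x=t+\tfrac12$, $\varphi_y=t-\tfrac25$), and by the analogue of \eqref{eq:random_surface_omega_connection_restriction} its two endpoints can never lie in the same $\omega$-cluster; reflecting the endpoint above $t$ then sends the gradient from $\ge 1-\varepsilon$ to $\big|(\varphi_x-t)-(t-\varphi_y)\big|$, which can be arbitrarily small. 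So the transformation does not preserve the extremality of the other edges, and the induction over $i\in S$ collapses. The paper resolves exactly this by restricting attention, via the sets $D_j$ in \eqref{eq:D_j_def} and the event $\Omega_j$, to configurations in which the relevant $\varphi_v$'s sit at distance $\ge 1$ from a fixed discrete set $M_j$ of admissible reflection heights; Claim~\ref{claim_1} then shows that the gradients on $\vec H$ are \emph{exactly} preserved. Your choice $t=\varphi_w+\tfrac12$ carries no such guarantee.

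Two further gaps. First, your cluster $\mathcal C$ is the $\omega$-component of $v$ in $\{x:\varphi_x>t\}$, which has no a priori size control; nothing bounds the number of boundary edges of $\mathcal C$ leading to vertices still above $t$, and for those edges the post-reflection gradient is only $\le 1$, not $\le\tfrac34$. So the claimed energy cost ``$O(\Delta)$ boundary edges, each costing at most $\exp(U(3/4)-U(0))$'' is unjustified. For the same reason, graph-distance separation of the edges in $S$ does not decouple the transformations: the clusters are determined by the (global) geometry of the high set $\{\varphi>t\}$, not by graph neighbourhoods, so two well-separated extremal edges can share, or interfere via, a single cluster. The paper's Lemma~\ref{UL_operator1} sidesteps all of this: instead of reflecting the cluster of $v$ and rescaling the endpoint, it reflects the cluster of the locking set $W(\varphi)\subset\{u:\{u,w\}\in E\}$, which has at most $\deg(w)\le\Delta$ elements, so as to merely \emph{unlock} the edge, and then a separate, elementary, one-variable conditional-density bound (Lemma~\ref{lem:prob_of_extremal_unlocked_edge}) produces the factor $\delta(U,\varepsilon)$ without any rescaling or Jacobian computation. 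Finally, the measure-preservation of your composed map is also not established: both $t$ and $\mathcal C$ are configuration-dependent, so you cannot simply quote Lemma~\ref{lem:flip_preserves_ES_coupling}, and the injectivity/disintegration argument required to justify the factor $\varepsilon^{-1}$ is exactly the kind of thing the paper's discretization (conditioning on $m(\varphi)=m\in M_j$ and on $W(\varphi)=W$, then applying a deterministic reflection) is designed to avoid.
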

To illustrate the result we note that when $U = U_{\text{hammock}}$
we have $\delta(U, \varepsilon) = \varepsilon$ and, in addition,
that if $G$ is a tree then the probability in \eqref{eq:main_result}
exactly equals $\varepsilon^k$.

We note that the dependence on the maximal degree $\Delta$ in
\eqref{eq:main_result} cannot be completely removed. Indeed, suppose
$G = K_{n,n}$ is a complete bipartite graph with partite classes
$V_1, V_2$. Take the boundary set $V_0 = \{v_0\}$ for some $v_0\in
V_1$, take $U = U_{\text{hammock}}$ and let $\varphi$ be randomly
sampled from $\mu_{U,G,V_{0}}$. It is straightforward to check that
for any $0<\varepsilon<1$,
\begin{equation*}
  \mathbb{P}\left(\left\{
\left|\varphi_{v_{i}}-\varphi_{w_{i}}\right|\ge
1-\varepsilon\,:\,v_i\in V_1, w_i\in V_2\right\} \right)\ge
\mathbb{P}\left(\varphi(V_1)\subseteq\left[0,\frac{\varepsilon}{2}\right],\,
\varphi(V_2)\subseteq
\left[1-\frac{\varepsilon}{2},1\right]\right)\ge \left(\frac{\varepsilon}{4}\right)^{2n-1},
\end{equation*}
with the exponent $2n-1$ significantly smaller than the amount $n^2$
of edges between $V_1$ and $V_2$ in $G$.

Notwithstanding the above, we point out that Theorem~\ref{main_thm}
remains true if $\Delta$ is replaced by the maximal degree over all
vertices other than the vertices of $V_0$. In fact,
inequality~\eqref{eq:main_result} holds for a given set of edges
$\left\{ v_{1},w_{1}\right\} ,...,\left\{ v_{k},w_{k}\right\}$ when
$\Delta$ is replaced by the maximal degree of vertices in $\left\{
v_{1},w_{1},...,v_{k},w_{k}\right\}\setminus V_0$, and this is the
phrasing that we shall establish in the proof. This fact will allow
us to work with the graph in which the set $V_0$ is contracted to a
single vertex.

In Section~\ref{sec:discussion_and_open_questions} we briefly discuss the consequences of Theorem~\ref{main_thm} to
the delocalization of random surfaces.

\subsubsection{Reflection principle for random surfaces}
Let us first remind the reflection principle for simple random walk,
a fundamental relation between the distributions of the maximum of
the walk and the value at its endpoint. Let $(X_j)$, $0\le j\le n$,
be a simple random walk. That is, $X_0=0$ and $\{X_j-X_{j-1}\}$,
$1\le j\le n$, are independent increments, each uniformly
distributed on $\{-1,1\}$. Then the reflection principle states that
for all integer $k,m$ satisfying $m\ge\max\{0,k\}$,
\begin{equation}\label{eq:reflection_principle}
  \mathbb{P}(\max\{X_j\,:\,0\le j\le n\}\ge m,\, X_n=k) = \mathbb{P}(X_n = 2m-k).
\end{equation}
The law of the maximum of the walk is obtained as an immediate
consequence,
\begin{equation}\label{eq:maximum_of_walk}
  \mathbb{P}(\max\{X_j\,:\,0\le j\le n\}\ge m) = 2\mathbb{P}(X_n \ge m) - \mathbb{P}(X_n = m) = \mathbb{P}(|X_n|\ge m) - \mathbb{P}(X_n = m).
\end{equation}
In the standard proof of the reflection principle one reflects the
final segment of the walk around height $m$ and observes that this
is a one-to-one transformation between the events in the two sides
of \eqref{eq:reflection_principle}. As our main tool in this work is
a reflection transformation for random surfaces, one may naturally
wonder whether it yields an analogue of
\eqref{eq:reflection_principle}. This turns out to be the case, as
we now proceed to describe. We mention that while our main interest
is in random surfaces, the result applies equally well to random
walks (having symmetric increments with monotone densities), as
these can be seen as random surfaces on a line segment graph, and
yields a bound similar to that obtained from Doob's maximal
inequality.

We first describe what replaces the maximum in
\eqref{eq:reflection_principle}. Let $G=(V,E)$ be a graph,
$V_0\subseteq V$ be non-empty and $\varphi:V\to \mathbb{R}$. Let us
write $\{V_0 \xleftrightarrow{\varphi< m} v\}$ for the event that
there exists a path $v_0, v_1, \ldots, v_k$ in $G$ such that $v_0\in
V_0$, $v_k = v$ and $\varphi_{v_i}< m$ for all $i$. We write $\{V_0
\centernot{\xleftrightarrow{\varphi< m}} v\}$ for the complementary
event, that the `height barrier' between $V_0$ and $v$ is at least
$m$, meaning that on any path from $V_0$ to $v$ there is some vertex
$w$ with $\varphi_w\ge m$. We similarly define $\{V_0
\xleftrightarrow{\varphi>m} v\}$, etc.

Observe that in the one-dimensional case, when $V = \{0,1,\ldots,
n\}$ with $E = \{\{i,i+1\}\,:\,0\le i<n\}$ and $V_0 = \{0\}$, we
have $\{V_0 \centernot{\xleftrightarrow{\varphi< m}} n\} =
\{\max\{\varphi_i\,:\,0\le i\le n\}\ge m\}$, so that our definition
generalizes that of the maximum in \eqref{eq:reflection_principle}.

\begin{theorem}\label{thm:reflection_principle}
Let $G=(V,E)$ be a finite connected graph, let $V_0\subseteq V$ be
non-empty, let $U$ be a potential satisfying the monotonicity
condition \eqref{eq:monotonicity_condition_surfaces} and the
assumption \eqref{eq:partition_function_condition} that
$\mu_{U,G,V_0}$ is well-defined. Let $\varphi$ be randomly sampled
from $\mu_{U,G,V_{0}}$. Then
\begin{equation}\label{eq:barrier_ineq}
  \frac{1}{2}\mathbb{P}(|\varphi_v|\ge m)\le \mathbb{P}(V_0
\centernot{\xleftrightarrow{\varphi< m}} v)\le
\mathbb{P}(|\varphi_v|\ge m)\quad \text{for all $v\in V,
  m\ge 0$}.
\end{equation}
If, additionally, $U$ satisfies the finite-support condition
\eqref{eq:Lipschitz_condition_surfaces} then
\begin{equation}\label{eq:Lipschitz_barrier_ineq}
  \mathbb{P}(V_0
\centernot{\xleftrightarrow{\varphi< m}} v)\ge
\mathbb{P}(|\varphi_v|\ge m) - \mathbb{P}(\varphi_v\in(m,m+1))\quad
\text{for all $v\in V,
  m\ge 0$}.
\end{equation}
\end{theorem}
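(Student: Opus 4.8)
The plan is to derive both displays from two one‑sided estimates, obtained by running the reflection cluster transformation at height $m$ together with the sign symmetry $\varphi\mapsto-\varphi$. I would first record the bookkeeping. If $\varphi_v\ge m$ then every path from $V_0$ to $v$ ends at a vertex of height $\ge m$, so there is the deterministic inclusion $\{\varphi_v\ge m\}\subseteq\{V_0\centernot{\xleftrightarrow{\varphi<m}}v\}$; and for $v\notin V_0$ the marginal of $\varphi_v$ has a density and is even (by $\varphi\mapsto-\varphi$), so $\mathbb{P}(\varphi_v\ge m)=\tfrac12\mathbb{P}(|\varphi_v|\ge m)$, which is already the left inequality of \eqref{eq:barrier_ineq}. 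Writing $A:=\{V_0\centernot{\xleftrightarrow{\varphi<m}}v\}$ and using $\mathbb{P}(A)=\mathbb{P}(\varphi_v\ge m)+\mathbb{P}(A,\varphi_v<m)$ together with the inclusion, the right inequality of \eqref{eq:barrier_ineq} reduces to $\mathbb{P}(A,\varphi_v<m)\le\mathbb{P}(\varphi_v\le-m)$, and \eqref{eq:Lipschitz_barrier_ineq} reduces (since $\mathbb{P}(\varphi_v=m)=\mathbb{P}(\varphi_v=m+1)=0$) to $\mathbb{P}(A,\varphi_v<m)\ge\mathbb{P}(\varphi_v\ge m+1)$. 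The cases $v\in V_0$ and $m=0$ are immediate.

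For the upper bound I would sample $\varphi\sim\mu_{U,G,V_0}$ and perform one step of the cluster algorithm for the reflection $R_m(x)=2m-x$: draw the bond configuration, let $C_v$ be the cluster of $v$, apply $R_m$ to $C_v$ if $C_v$ does not meet $V_0$ (otherwise keep $\varphi$), and finish by applying $\varphi\mapsto-\varphi$; call the resulting randomised map $\Phi$. By the analysis of the transformation in Section~\ref{sec:reflection_transformations_cluster_algorithms} — and this is where monotonicity \eqref{eq:monotonicity_condition_surfaces} is used, making the freezing weights legitimate probabilities, a same‑side bond acting ferromagnetically and an across‑level bond antiferromagnetically — both operations preserve $\mu_{U,G,V_0}$, hence so does $\Phi$. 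If, on the event $\{A,\varphi_v<m\}$, the cluster $C_v$ almost surely misses $V_0$, then $\Phi$ flips $C_v$ and carries $\varphi_v<m$ to $\varphi_v-2m<-m$; thus $\{\varphi\in A,\varphi_v<m\}\subseteq\{\Phi(\varphi)_v\le-m\}$ off a null set, and $\mu$‑invariance of $\Phi$ gives $\mathbb{P}(A,\varphi_v<m)\le\mathbb{P}(\Phi(\varphi)_v\le-m)=\mathbb{P}(\varphi_v\le-m)$, as required.

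For \eqref{eq:Lipschitz_barrier_ineq} the same machine is run backwards, now invoking the finite‑support hypothesis \eqref{eq:Lipschitz_condition_surfaces}. Starting from $\varphi_v\ge m+1$, the Lipschitz bound puts every neighbour of $v$ above height $m$, and the same‑side bonds incident to $v$ are frozen with certainty (reflecting their endpoints separately would cost infinite energy); applying $R_m$ to the cluster of $v$ (this time without the sign flip, so that the transformation preserves $\mu_{U,G,V_0}$ and lowers $\varphi_v$ below $m$) one argues — again from a confinement property of $C_v$ — that the output lies in $\{A,\varphi_v<m\}$, so that $\{\varphi_v\ge m+1\}$ is mapped into $\{A,\varphi_v<m\}$ and invariance yields $\mathbb{P}(A,\varphi_v<m)\ge\mathbb{P}(\varphi_v\ge m+1)$. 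Rearranging as in the first paragraph gives \eqref{eq:Lipschitz_barrier_ineq}.

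The step I expect to be the main obstacle is the combinatorial confinement claim underlying both directions: that the reflection cluster of $v$ cannot reach $V_0$ across the barrier (and, in the reverse direction, cannot spill out of the height‑$m$ shell enclosing $v$ when $\varphi_v\ge m+1$), so that the cluster is flipped and $\varphi_v$ ends up on the correct side of $m$. For the hammock potential \eqref{eq:hammock_potential} this is transparent: every across‑level bond has both endpoints within Lipschitz distance, so such bonds never freeze, $C_v$ stays inside the connected component of $v$ in $\{\varphi<m\}$, and the barrier event declares that component disjoint from $V_0$. For a general monotone (even merely finitely supported) potential the across‑level, "frozen‑apart'' bonds genuinely occur and could a priori let $C_v$ wander back below height $m$; controlling this — by a more careful choice of which cluster(s) to reflect, or by exploiting the domain‑Markov structure of $\mu_{U,G,V_0}$ along the barrier — is where I expect the bulk of the work to lie.
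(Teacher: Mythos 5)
Your reduction of both inequalities to the one-sided estimates $\mathbb{P}(A,\varphi_v<m)\le\mathbb{P}(\varphi_v\le-m)$ and $\mathbb{P}(A,\varphi_v<m)\ge\mathbb{P}(\varphi_v\ge m+1)$ (writing $A=\{V_0\centernot{\xleftrightarrow{\varphi<m}}v\}$) is exactly the paper's bookkeeping, and your upper-bound argument --- flip the $\omega$-cluster of $v$ under the $\tau_m$-Edwards--Sokal coupling, then compose with a global sign change --- is the paper's argument in disguise. The ``confinement'' you flag at the end is not actually in question for this direction: property~\eqref{eq:random_surface_omega_connection_restriction} already guarantees, for every monotone potential (not only the hammock), that on $\{A,\varphi_v<m\}$ the cluster $C_v$ lies entirely in $\{\varphi<m\}$, and the barrier event $A$ forbids any sub-level-$m$ path from $V_0$ to $v$, so $C_v$ misses $V_0$ automatically. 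That part of your proposal is correct.

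The genuine gap is in the lower bound. You propose to apply $R_m(x)=2m-x$ to $C_v$ on $\{\varphi_v\ge m+1\}$ and assert the output lands in $\{A,\varphi_v<m\}$. This fails: any $w\in C_v$ with $\varphi_w\in(m,m+1]$ is sent to $2m-\varphi_w\in[m-1,m)$, so $w$ ceases to be a barrier point and the reflected configuration need not satisfy $A$. Concretely, take the path graph on $\{0,1,2,3\}$, $V_0=\{0\}$, $v=3$, $m=1$, $U$ the hammock potential and $\varphi=(0,0.5,1.3,2.1)$. Under $\tau_1$ the bond $\{2,3\}$ is forced open (since $U(2-\varphi_2-\varphi_3)=U(-1.4)=\infty$) while $\{1,2\}$ is forced closed, so $C_3=\{2,3\}$ deterministically; reflecting gives $\varphi^{\omega,3}=(0,0.5,0.7,-0.1)$, for which $V_0\xleftrightarrow{\varphi^{\omega,3}<1}3$, i.e.\ the barrier is destroyed. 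The paper avoids this by reflecting around $m+\tfrac12$ instead of $m$: the Lipschitz condition together with $\varphi_v\ge m+1$ forces every path from $V_0$ to $v$ through a vertex of height in $[m,m+1]$, and $\tau_{m+1/2}$ maps $[m,m+1]$ to itself, so such a vertex remains a barrier after the flip, while $C_v\subseteq\{\varphi>m+\tfrac12\}$ (by monotonicity) still misses $V_0$ and still sends $\varphi_v$ to $2m+1-\varphi_v\le m$. With that single change of reflection height, the rearrangement in your first paragraph goes through and recovers \eqref{eq:Lipschitz_barrier_ineq}.
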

The above theorem gives an analogue of \eqref{eq:maximum_of_walk}
for random surfaces and our proof proceeds by first establishing an
analogue of \eqref{eq:reflection_principle}, see Section~\ref{sec:sublevel_set_connectivity}. We
remark that the lower bound in \eqref{eq:barrier_ineq} is trivial,
as $\frac{1}{2}\mathbb{P}(|\varphi_v|\ge m) =
\mathbb{P}(\varphi_v\ge m)$ and $\{\varphi_v\ge m\}\subseteq \{V_0
\centernot{\xleftrightarrow{\varphi< m}} v\}$. The improved lower
bound \eqref{eq:Lipschitz_barrier_ineq} does not hold without
additional assumptions (such as
\eqref{eq:Lipschitz_condition_surfaces}) as one can check on the
example of the single-edge graph, $V = \{0,1\}$, $E = \{\{0,1\}\}$,
$V_0 = \{0\}$ and $v = 1$, taking, e.g., $U(x) = x^2$ and $m$ large.

A discussion of the relation of the above results to the study of
excursion-set percolation of random surfaces appears in
Section~\ref{sec:discussion_and_open_questions}.

\subsection{Spin systems}\label{sec:spin_systems}
Our final result concerns the monotonicity of densities in
spin $O(n)$ models and is closely related to an inequality of
Armend{\'a}riz, Ferrari and Soprano-Loto \cite[Lemma
2.4]{armendariz2015phase} and Soprano-Loto \cite[Section
1.3.6]{soprano2015transicion}.

Let $G = (V, E)$ be a finite graph and
$V_0\subseteq V$ be a (possibly empty) subset of its vertices. Let $U:[-1,1]\to(-\infty,\infty]$ be a measurable function. The
spin $O(n)$ model with integer $n\ge 1$, potential function $U$ and normalized to equal $e_1:=(1,0,\ldots,0)\in\mathbb{S}^{n-1}$ at the
subset $V_0$, is the probability measure $\mu_{U, G, n, V_0}$ on
functions $\varphi:V\to\mathbb{S}^{n-1}$ defined by
\begin{equation}\label{eq:spin_O_n_measure_boundary_conditions}
  d\mu_{U, G, n, V_0}(\varphi) := \frac{1}{Z_{U, G, n, V_0}} \exp\Bigg(-\sum_{\{v,w\}\in E}
  U(\langle \varphi_v, \varphi_w\rangle)\Bigg) \prod_{v\in V\setminus V_0} d\mu_{\mathbb{S}^{n-1}}(\varphi_v) \prod_{v\in V_0} d\delta_{e_1}(\varphi_v),
\end{equation}
where $\langle \cdot, \cdot\rangle$ denotes the standard inner
product in $\mathbb{R}^n$, $\mu_{\mathbb{S}^{n-1}}$ denotes the uniform measure on $\mathbb{S}^{n-1}$,
$\delta_{e_1}$ is a Dirac delta measure at $e_1$ and
\begin{equation*}
  Z_{U, G, n, V_0} := \int \exp\Bigg(-\sum_{\{v,w\}\in E}
  U(\langle \varphi_v, \varphi_w\rangle)\Bigg) \prod_{v\in V\setminus V_0} d\mu_{\mathbb{S}^{n-1}}(\varphi_v) \prod_{v\in V_0} d\delta_{e_1}(\varphi_v)
\end{equation*}
which we shall assume satisfies
\begin{equation}\label{eq:partition_function_condition_spin_systems}
  0 < Z_{U, G, n, V_0} < \infty
\end{equation}
for $\mu_{U, G, n, V_0}$ to be well-defined. The standard spin $O(n)$ model is obtained
as the special case where $U(r) = -\beta r$, with $\beta$
representing the inverse temperature. Special cases of the standard spin $O(n)$ model have names of their own: The case $n=1$ is the Ising model, the case $n=2$ is the XY model, or plane rotator model, and the case $n=3$ is the Heisenberg model.

Observe that when $\varphi$ is randomly sampled from $\mu_{U, G, n, V_0}$,
the distribution of $\varphi_v$ is absolutely continuous with respect to $\mu_{\mathbb{S}^{n-1}}$ for each $v\in V\setminus V_0$. Denote its density function by $d_v$, so that $d_v:\mathbb{S}^{n-1}\to[0,\infty)$. Note that $d_v$ is only defined up to a $\mu_{\mathbb{S}^{n-1}}$-null set and that, by symmetry, there is a version of $d_v$ in which $d_v(b)$ is a function of $\langle b, e_1\rangle$. The next theorem states that
monotonicity of the potential function implies monotonicity of the densities $d_v$.

\begin{theorem}\label{thm:monotonicity_of_spin_O_n_density}
  Let $G = (V,E)$ be a finite graph and $V_0\subseteq V$ be a (possibly empty) subset of its vertices. Let $n\ge 1$ be an integer. Suppose that
  $U:[-1,1]\to(-\infty,\infty]$ is non-increasing in the sense that
  \begin{equation}\label{eq:U_monotone_spins}
    U(r)\ge U(s)\quad\text{for $r\le s$}
  \end{equation}
  and that \eqref{eq:partition_function_condition_spin_systems} holds. Let $\varphi$ be randomly sampled from $\mu_{U, G, n, V_0}$. Then for any
  $v\in V\setminus V_0$, there exists a version of the density $d_v$ satisfying
  \begin{equation}\label{eq:monotonicity_of_density}
    d_v(b_1) \ge d_v(b_2)\quad\text{when $\langle b_1, e_1\rangle\ge \langle b_2, e_1\rangle$}.
  \end{equation}
\end{theorem}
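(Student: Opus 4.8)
The plan is to reduce \eqref{eq:monotonicity_of_density} to a statement about a single reflection of the sphere, and then to read that statement off from the graphical (cluster) representation of the spin $O(n)$ model attached to the reflection, as developed in Section~\ref{sec:reflection_transformations_cluster_algorithms}. \emph{Reduction to one reflection.} Since a version of $d_v$ depends on $b$ only through $\langle b,e_1\rangle$, it suffices to prove that $d_v(b_1)\ge d_v(b_2)$ for every pair $b_1,b_2\in\mathbb S^{n-1}$ with $\langle b_1,e_1\rangle>\langle b_2,e_1\rangle$ (the case of equal inner products being trivial, and the strict inequality forcing $b_1\ne b_2$). Fixing such a pair, set $\vec r:=(b_1-b_2)/\|b_1-b_2\|$ and let $R$ be the reflection of $\mathbb R^n$ across $\vec r^{\perp}$. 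A short computation gives $Rb_1=b_2$, $\langle b_1,\vec r\rangle=\tfrac12\|b_1-b_2\|>0$ and $\langle e_1,\vec r\rangle=\|b_1-b_2\|^{-1}\bigl(\langle b_1,e_1\rangle-\langle b_2,e_1\rangle\bigr)>0$, so both $b_1$ and $e_1$ lie in the open half-sphere $H:=\{b:\langle b,\vec r\rangle>0\}$, while $b_2=Rb_1\in H^c$; as $(b_1,b_2)$ ranges over all admissible pairs, $R$ ranges over all reflections whose axis $\vec r$ satisfies $\langle e_1,\vec r\rangle>0$.

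\emph{The graphical representation.} Decompose each spin along $\vec r$, writing $\varphi_v=s_v\vec r+\psi_v$ with $s_v=\langle\varphi_v,\vec r\rangle$ and $\psi_v\perp\vec r$, and set $\sigma_v:=\sign(s_v)$; let $\Pi:=(\pi(\varphi_v))_{v\in V}$ be the folded configuration, where $\pi$ is the identity on $H$ and equals $R$ on $H^c$. Because $\mu_{\mathbb S^{n-1}}$ is $R$-invariant, $\langle R\varphi_v,R\varphi_w\rangle=\langle\varphi_v,\varphi_w\rangle$ and $\langle R\varphi_v,\varphi_w\rangle=\langle\varphi_v,\varphi_w\rangle-2s_vs_w$, the passage $\varphi_v\leftrightarrow(\pi(\varphi_v),\sigma_v)$ is precisely the Ising-type spin-space symmetry underlying the cluster algorithm for the spin $O(n)$ model. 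Invoking the framework of Section~\ref{sec:reflection_transformations_cluster_algorithms} with the reflection $R$: conditionally on $\Pi$ one samples a bond configuration $\omega\in\{0,1\}^E$ — an edge $\{v,w\}$ being open with positive probability only when $\sigma_v=\sigma_w$, in which case with a probability depending on $\Pi$ alone and monotone in $U(-|s_v||s_w|+\langle\psi_v,\psi_w\rangle)-U(|s_v||s_w|+\langle\psi_v,\psi_w\rangle)\ge0$ (nonnegativity by \eqref{eq:U_monotone_spins}) — such that, conditionally on $(\Pi,\omega)$, the signs $(\sigma_v)_{v\in V}$ are constant on each $\omega$-cluster, equal to $+1$ on every cluster meeting $V_0$ (as $\varphi_v=e_1\in H$ forces $\sigma_v=+1$ there), and independent uniform in $\{\pm1\}$ on the remaining clusters. (Equivalently, conditionally on $\Pi$ the signs $(\sigma_v)_{v\notin V_0}$ form a ferromagnetic Ising model with $+$ boundary on $V_0$.) In either description, $\mathbb P(\sigma_v=+1\mid\Pi,\omega)\ge\tfrac12$ for every $v\in V\setminus V_0$, whence $\mathbb P(\sigma_v=+1\mid\Pi)\ge\mathbb P(\sigma_v=-1\mid\Pi)$ almost surely.

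\emph{Conclusion and the hard part.} For measurable $A\subseteq H$ one has $\{\varphi_v\in A\}=\{\pi(\varphi_v)\in A,\ \sigma_v=+1\}$ and $\{\varphi_v\in RA\}=\{\pi(\varphi_v)\in A,\ \sigma_v=-1\}$, and $\mathbf 1_{\{\pi(\varphi_v)\in A\}}$ is $\Pi$-measurable, so
\[
\mathbb P(\varphi_v\in A)-\mathbb P(\varphi_v\in RA)=\mathbb E\!\left[\mathbf 1_{\{\pi(\varphi_v)\in A\}}\bigl(\mathbb P(\sigma_v=+1\mid\Pi)-\mathbb P(\sigma_v=-1\mid\Pi)\bigr)\right]\ge 0 .
\]
Since $R$ preserves $\mu_{\mathbb S^{n-1}}$, this reads $\int_A\bigl(d_v(b)-d_v(Rb)\bigr)\,d\mu_{\mathbb S^{n-1}}(b)\ge0$ for all $A\subseteq H$, i.e.\ $d_v(b)\ge d_v(Rb)$ for $\mu_{\mathbb S^{n-1}}$-a.e.\ $b\in H$; collecting these inequalities over all admissible reflections $R$ and using that $d_v$ factors through $\langle\cdot,e_1\rangle$, a Fubini argument shows that the common value of $d_v$ on $\{b:\langle b,e_1\rangle=t\}$ is almost everywhere non-decreasing in $t$, and the standard fact that an a.e.-monotone function has a monotone version then produces a version of $d_v$ obeying \eqref{eq:monotonicity_of_density}. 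I expect the substantive steps to be the first two: recognizing that each pointwise comparison is governed by a single reflection for which $e_1$ automatically lies on the favored side, and extracting from Section~\ref{sec:reflection_transformations_cluster_algorithms} the conditional law ``uniform sign per $\omega$-cluster, with the $V_0$-cluster pinned to $+1$''; granting these, the bound $\mathbb P(\sigma_v=+1\mid\,\cdot\,)\ge\tfrac12$ is immediate. The remaining points are routine: the degenerate cases $n=1$ and $V_0=\emptyset$ (where \eqref{eq:monotonicity_of_density} holds with equality), the $\mu_{\mathbb S^{n-1}}$-null set of configurations with some $\varphi_v\in\{\pm\vec r\}$, potentials taking the value $+\infty$ (which merely turn some bonds into hard constraints), and the measure-theoretic passage to an everywhere-monotone version of $d_v$.
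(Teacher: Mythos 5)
Your proposal is correct and rests on the same core mechanism as the paper's: a single reflection sending $b_1$ to $b_2$ while keeping $e_1$ weakly on one side of the hyperplane, the one-sided cluster property \eqref{spin_one_sided_reflection}, and the measure-preserving cluster flip from Lemma~\ref{lem:flip_preserves_ES_coupling}. (Your axis is the negation of the paper's $a=(b_2-b_1)/\|b_2-b_1\|$, which is immaterial.) Where you differ is in how the inequality is extracted. The paper works directly with ball probabilities: on $\{\varphi_v\in B(b_2,r)\}$ the flip $\varphi\mapsto\varphi^{\omega,v}$ sends $\varphi_v$ into $B(b_1,r)$ since $\tau_a(b_2)=b_1$ and $\tau_a$ is an isometry, giving $\mathbb{P}(\varphi_v\in B(b_2,r))\le\mathbb{P}(\varphi_v\in B(b_1,r))$; one then lets $r\downarrow 0$ via Lebesgue differentiation \eqref{eq:density_limit}, one pair $(b_1,b_2)$ at a time. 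You instead pass through the fold-plus-sign decomposition $\varphi_v\leftrightarrow(\pi(\varphi_v),\sigma_v)$ and invoke the full Edwards--Sokal conditional law (given the fold and $\omega$, the signs are uniform $\pm1$ per cluster, pinned to $+1$ on clusters touching $V_0$). That claim is true but stronger than what Section~\ref{sec:reflection_transformations_cluster_algorithms} explicitly establishes: Lemma~\ref{lem:flip_preserves_ES_coupling} only proves invariance under a single cluster flip, not the conditional product structure, which needs a separate factorization argument (the point being that the weight of a closed edge is the same whether the two signs agree or disagree). Fortunately the only consequence you actually use, $\mathbb{P}(\sigma_v=+1\mid\Pi,\omega)\ge\tfrac12$, follows from Lemma~\ref{lem:flip_preserves_ES_coupling} alone: the flip $\varphi\mapsto\varphi^{\omega,v}$ fixes the fold and $\omega$ while negating $\sigma_v$ whenever $v\centernot{\xleftrightarrow{\omega}}V_0$, making the conditional law of $\sigma_v$ symmetric there. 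So the argument goes through; it appeals to more graphical-representation structure than the paper needs and finishes with a heavier measure-theoretic step (collecting a.e.\ inequalities over all admissible reflections) than the paper's per-pair differentiation limit, but these are stylistic rather than substantive differences.
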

We make a few remarks regarding the theorem: The conclusion of the
theorem implies that $\mathbb{E}(\langle \sigma_v,
e_1\rangle)\ge 0$ for all $v\in V$, as in Griffiths first inequality \cite{Gri67, Gin70}. However, we are not aware that the monotonicity of the density has been noted in
earlier works, even for the standard ferromagnetic spin $O(n)$ model
(when $U(r) = -\beta r$ with $\beta>0$) with $n\ge 2$.

The result need not hold without the monotonicity condition
\eqref{eq:U_monotone_spins}. Indeed, monotonicity is a necessary
condition when $G$ is the single-edge graph $V = \{0,1\}$, $E =
\{\{0,1\}\}$ with $x=0$, $y=1$.

An analogous result holds for random surface models of the type
\eqref{eq:random_surface_measure} as we now state. This result,
however, follows easily from convexity considerations as explained
in Section \ref{sec:monotonicity_of_pair_correlations}.
\begin{theorem}\label{thm:non-increasing_density}
  Let $G=(V,E)$ be a finite connected graph, let $V_0\subseteq V$ be
non-empty, let $U$ be a potential satisfying the monotonicity
condition \eqref{eq:monotonicity_condition_surfaces} and the
assumption \eqref{eq:partition_function_condition} that
$\mu_{U,G,V_0}$ is well-defined. Let $\varphi$ be randomly sampled
from $\mu_{U,G,V_{0}}$. Then for any $x\in V$,
  \begin{equation*}
    |\varphi_x|\text{ has a non-increasing
    density with respect to Lebesgue measure on $[0,\infty)$}.
  \end{equation*}
\end{theorem}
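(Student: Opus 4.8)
The plan is to reduce the statement, by a layer–cake decomposition of $e^{-U}$, to the case of ``generalized hammock'' potentials — for which the law of $\varphi$ is uniform on a convex, centrally symmetric polytope — and then to invoke a Brunn--Minkowski / Pr\'ekopa--Leindler type inequality together with the reflection symmetry $\varphi\mapsto-\varphi$. First I would fix $x\in V\setminus V_0$ (for $x\in V_0$ the coordinate $\varphi_x$ vanishes almost surely and the statement is degenerate). Let $g_x$ denote the density of $\varphi_x$ with respect to Lebesgue measure on $\mathbb{R}$; it exists, and by Tonelli's theorem it has the explicit representation $g_x(t)=\frac{1}{Z_{U,G,V_0}}\int\exp\!\big(-\sum_{\{v,w\}\in E}U(\varphi_v-\varphi_w)\big)\big|_{\varphi_x=t}\,\prod_{v\in V\setminus(V_0\cup\{x\})}d\varphi_v$ (after setting $\varphi_{V_0}\equiv 0$). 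Since $U$ is even and the $\delta_0$–pinning is symmetric, the law of $\varphi$ is invariant under $\varphi\mapsto-\varphi$, so $g_x$ is an even function and the density of $|\varphi_x|$ on $[0,\infty)$ equals $t\mapsto 2g_x(t)$. It therefore suffices to show that $g_x$ is non-increasing on $[0,\infty)$, i.e.\ that $g_x(t_1)\ge g_x(t_2)$ whenever $|t_1|\le|t_2|$.

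Next, because $U$ is even and non-decreasing on $[0,\infty)$, the function $e^{-U}$ is a non-negative, even, and (on $[0,\infty)$) non-increasing function on $\mathbb{R}$, so the layer–cake formula writes it, for a.e.\ $x$, as a mixture $e^{-U(x)}=\int_{(0,\infty]}\mathbf{1}_{(-\rho,\rho)}(x)\,d\nu(\rho)$ of indicators of symmetric intervals, for a suitable finite measure $\nu$. Substituting this for each edge and applying Tonelli, the weight $\exp(-\sum_{\{v,w\}\in E}U(\varphi_v-\varphi_w))$ becomes $\int\prod_{e=\{v,w\}}\mathbf{1}_{(-\rho_e,\rho_e)}(\varphi_v-\varphi_w)\,d\nu^{\otimes E}(\rho_e)$, so $\mu_{U,G,V_0}$ is realized as a mixture, over the ``range vector'' $(\rho_e)_{e\in E}\sim\nu^{\otimes E}$, of the measures uniform on the polytope $P_{(\rho_e)}:=\{\psi\in\mathbb{R}^V:\psi_{V_0}\equiv 0,\ |\psi_v-\psi_w|<\rho_e\ \text{for }e=\{v,w\}\}$. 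By Tonelli again, the hypothesis $Z_{U,G,V_0}<\infty$ guarantees that $P_{(\rho_e)}$ has finite $(|V|-|V_0|)$–dimensional volume for $\nu^{\otimes E}$–a.e.\ $(\rho_e)$, and one obtains $g_x(t)=\frac{1}{Z_{U,G,V_0}}\int\operatorname{vol}_{d}\big(P_{(\rho_e)}\cap\{\psi_x=t\}\big)\,d\nu^{\otimes E}(\rho_e)$ with $d=|V|-|V_0|-1$.

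It then remains to analyze each polytope separately. For fixed $(\rho_e)$ the set $P_{(\rho_e)}$ is convex and symmetric under $\psi\mapsto-\psi$, and this reflection maps the slice $P_{(\rho_e)}\cap\{\psi_x=-t\}$ onto minus the slice $P_{(\rho_e)}\cap\{\psi_x=t\}$; hence, by the Pr\'ekopa--Leindler inequality (equivalently, by Brunn--Minkowski applied to these slices), the function $t\mapsto\operatorname{vol}_d\big(P_{(\rho_e)}\cap\{\psi_x=t\}\big)$ is log-concave on $\mathbb{R}$ and even, and an even log-concave function on $\mathbb{R}$ is non-increasing on $[0,\infty)$. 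Averaging over $(\rho_e)$ — a mixture of non-increasing functions is non-increasing — shows that $g_x$ is non-increasing on $[0,\infty)$, which completes the argument. The only genuine subtlety is that $U$ is assumed monotone but not convex, so $e^{-U}$ itself need not be log-concave and Pr\'ekopa--Leindler cannot be applied to it directly; the layer–cake reduction to indicators of convex polytopes is precisely what circumvents this, and is the sense in which the result ``follows from convexity considerations''.
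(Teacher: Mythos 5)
Your proof is correct and coincides with what the paper explicitly presents as the ``alternative route to the proof of Theorem~\ref{thm:non-increasing_density} via convexity considerations.'' Your layer--cake decomposition of $e^{-U}$ into indicators $\mathbf{1}_{(-\rho,\rho)}$ of symmetric intervals is precisely the Edwards--Sokal-type decomposition of $\mu_{U,G,V_0}$ as a mixture, over the range vector $t\in(0,\infty)^E$, of the inhomogeneous hammock measures $\mu_{t,G,V_0}$ from \eqref{hammock_measure}; each such measure is uniform on a centrally symmetric convex polytope, hence has even, log-concave one-dimensional marginals by Brunn--Minkowski or Pr\'ekopa--Leindler (the paper only gestures at this last step, and your explicit justification is correct). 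The paper's primary proof, however, proceeds differently: it runs the cluster-algorithm argument of Section~\ref{sec:monotonicity_of_pair_correlations} verbatim, replacing the sphere reflection $\tau_a$ used for Theorem~\ref{thm:monotonicity_of_spin_O_n_density} by the line reflection $\tau_m$ of \eqref{eq:random_surfaces_reflection}. That route is what gives the theorem its place in the paper and is more robust, since it simultaneously yields Theorem~\ref{thm:monotonicity_of_spin_O_n_density} and the clock-model and integer-valued analogues with no change, whereas the convexity route you took, while tidy for $\mathbb{R}$-valued surfaces, does not extend to the spin $O(n)$ setting (the paper explicitly remarks that no such alternative is known there).
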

We mention that analogous results to Theorem~\ref{thm:monotonicity_of_spin_O_n_density} and Theorem~\ref{thm:non-increasing_density} remain valid for clock models (spins on equally spaced points of $\mathbb{S}^1$) and \emph{integer-valued} height functions, respectively, and that our proofs apply to this setup without change.

\section{Cluster
algorithms and reflection transformations}\label{sec:reflection_transformations_cluster_algorithms}

In this section we describe the cluster algorithms and reflection transformations on which
our results are based. As mentioned in the introduction, such ideas
are not new, starting with the pioneering works of Swendsen-Wang \cite{swendsen1987nonuniversal} and
Wolff \cite{wolff1989collective}, they have been developed by many authors, mostly in the
context of fast simulation algorithms (cluster algorithms) but also
in theoretical contexts; see \cite{kandel1991general, caracciolo1993wolff, sokal1997monte, chayes1997graphical, chayes1998graphical, soprano2015transicion} for surveys and some recent
results. Nevertheless, we
believe that there is still room for presenting the special case
that we rely upon in some generality, highlighting connections with
previous works, to raise further awareness to the general framework
and its potential theoretical use. The general description below is followed by specific examples.

Let $(S,\S)$ be a measurable space and let $G = (V,E)$ be a finite
graph. Let $\vec{E}$ be an arbitrary orientation of the edges, i.e., $\vec{E}$ consists of either $(v,w)$ or $(w,v)$, but not both, for each edge $\{v,w\}\in E$. For each vertex $v\in V$, let $\lambda_v$ be a (finite or
infinite) measure on $(S,\S)$ and for each directed edge $(v,w)\in \vec{E}$, let
$h_{(v,w)}:S\times S\to [0,\infty)$ be a measurable function.
%satisfying
%\begin{equation}\label{eq:h_symmetric}
%  h_{(v,w\}}(a,b) = h_{\{v,w\}}(b,a),\quad a,b\in S.
%\end{equation}
The model is defined by the probability measure $\mu_{\lambda, h,
G}$ on configurations $\varphi:V\to S$ given by
\begin{equation}\label{eq:measure_for_reflections}
  d\mu_{\lambda, h, G}(\varphi) = \frac{1}{Z_{\lambda, h, G}}
  \prod_{(v,w)\in \vec{E}} h_{(v,w)}(\varphi_v, \varphi_w) \prod_{v\in
  V} d\lambda_v(\varphi_v),
\end{equation}
where
\begin{equation*}
  Z_{\lambda, h, G} = \int \prod_{(v,w)\in \vec{E}} h_{(v,w)}(\varphi_v, \varphi_w) \prod_{v\in
  V} d\lambda_v(\varphi_v)
\end{equation*}
and we make the assumption that
\begin{equation}\label{eq:finite_partition_function}
0<Z_{\lambda,h,G}<\infty.
\end{equation}
In many of our examples $h$ will be specified on the undirected edge set $E$ but the possibility to define it on directed edges gives the model extra flexibility.

The reflection transformation is based on a function $\tau:S\to S$
with the following properties: For some set $V_0\subseteq V$,
\begin{align}
    &\text{$\tau$ is an involution:}&&\text{$\tau(\tau(s)) = s$ for
    all $s\in S$},\label{eq:tau_involution}\\
    &\text{$\tau$ preserves $\lambda$ for $v\notin V_0$:}&&\text{$\lambda_v\circ\tau^{-1} =
    \lambda_v$ for all $v\in V\setminus V_0$},\label{eq:tau_preserves_lambda}\\
    &\text{$\tau$ preserves $h$:}&&\text{$h_{(v,w)}(\tau(a), \tau(b)) =
    h_{(v,w)}(a,b)$ for all $(v,w)\in \vec{E},\,\, a,b\in
    S$}.\label{eq:tau_preserves_h}
\end{align}
Here $V_0$ plays the role of the `boundary' of $G$ in the sense that
we think of $(\lambda_v)$, $v\in V_0$, which are possibly
concentrated on a single value, as prescribing boundary conditions
for the measure $\mu_{\lambda, h, G}$. We allow the possibility that
$V_0 = \emptyset$ corresponding to free boundary conditions (but in
any case we require \eqref{eq:finite_partition_function}). We call a
$\tau$ satisfying the above properties a \emph{reflection}.

The reflection $\tau$ identifies `embedded Ising spins' in the model in the following sense. Suppose $\varphi$ is randomly sampled from $\mu_{\lambda, h, G}$ and define $\psi_v:=\{\varphi_v, \tau(\varphi_v)\}$, $v\in V$. Then, conditioned on $\psi$, each $\varphi_v$ has at most two possible values, and the joint distribution of these new `binary spins' is that of an Ising model (with general coupling constants, not necessarily of one sign, and zero magnetic field).

Fix a reflection $\tau$. We define a joint probability distribution,
the \emph{$\tau$-Edwards-Sokal coupling}, on pairs $(\varphi,
\omega)$ where $\varphi:V\to S$ is a configuration and
$\omega:E\to\{0,1\}$ may be thought of as a set of edges, by the
following prescription: The marginal distribution of $\varphi$ is
$\mu_{\lambda, h, G}$.
\begin{equation}\label{eq:omega_given_varphi}
  \text{Given $\varphi$, the $(\omega_{\{v,w\}})$ are
independent and satisfy $\mathbb{P}(\omega_{\{v,w\}} =
1\,|\,\varphi) =
  p_{\{v,w\}}(\varphi),\quad\{v,w\}\in E$},
\end{equation}
where, letting $(v,w)\in\vec{E}$ be the directed version of $\{v,w\}$,
\begin{equation}\label{eq:p v w formula}
  p_{\{v,w\}}(\varphi) := \max\left(1 -
  \frac{h_{(v,w)}(\tau(\varphi_v),
  \varphi_w)}{h_{(v,w)}(\varphi_v, \varphi_w)}, 0\right),
\end{equation}
where we set $\frac{0}{0}:=1$, $\frac{t}{0}=\infty$ for $t>0$ and we
note that $h_{(v,w)}(\tau(\varphi_v), \varphi_w)$ equals $h_{(v,w)}(\varphi_v, \tau(\varphi_w))$ due to the
assumptions \eqref{eq:tau_involution} and
\eqref{eq:tau_preserves_h} so the latter expression can be used instead of the former in~\eqref{eq:p v w formula}. Unfortunately, the marginal distribution
of $\omega$ does not seem to have a simple formula in general. We
note for later use the following immediate property,
\begin{equation}
  \text{Conditioned on $\varphi$, $\omega_{\{v,w\}} = 0$ almost surely for each $(v,w)\in\vec{E}$ with $h_{(v,w)}(\tau(\varphi_v),
  \varphi_w)\ge h_{(v,w)}(\varphi_v, \varphi_w)$}.
\end{equation}
In particular, if $\tau(\varphi_v) = \varphi_v$ then
$\omega_{\{v,w\}} = 0$ for all edges $\{v,w\}$ incident to $v$. We
also observe that if $h_{(v,w)}$ takes values in $\{0,1\}$, as
in the case of the hammock potential (see \eqref{eq:hammock_potential}), then
$p_{\{v,w\}}(\varphi)$ also belongs to $\{0,1\}$, almost surely, so
that $\omega$ is a deterministic function of $\varphi$.

We proceed to describe the reflection transformation. Let $(\varphi,
\omega)\in S^V\times \{0,1\}^E$ and let $x\in V$. Write $x
\xleftrightarrow{\omega} v$ if there is a path from $x$ to $v$ with
$\omega_{\{u,u'\}} = 1$ for all edges $\{u,u'\}$ along the path.
Similarly write $x \xleftrightarrow{\omega} V_0$ if there is some
$v_0\in V_0$ with $x \xleftrightarrow{\omega} v_0$ and write $x
\centernot{\xleftrightarrow{\omega}} v$ or $x
\centernot{\xleftrightarrow{\omega}} V_0$ for the non-existence of
such paths. The reflected configuration $\varphi^{\omega, x}:V\to S$
is defined by:
\begin{equation}\label{eq:flipped_configuration}
  \text{If $x \xleftrightarrow{\omega} V_0$ then $\varphi^{\omega, x}:=\varphi$. Otherwise, $\varphi^{\omega, x}_v := \begin{cases}
    \tau(\varphi_v)& x
\xleftrightarrow{\omega} v\\
    \varphi_v& x
\centernot{\xleftrightarrow{\omega}} v
  \end{cases}.$}
\end{equation}
That is, $\varphi^{\omega, x}$ is formed by applying $\tau$ to all
vertices in the $\omega$-connected component of $x$, unless this
connected component intersects $V_0$ in which case $\varphi^{\omega,
x} = \varphi$. The following lemma shows that this transformation
preserves the distribution of the $\tau$-Edwards-Sokal coupling.
\begin{lemma}\label{lem:flip_preserves_ES_coupling}
  Let $(\varphi, \omega)$ be randomly sampled from the
  $\tau$-Edwards-Sokal coupling. For each $x\in V$,
  \begin{equation}\label{eq:flip_preserves_distribution}
    (\varphi^{\omega, x}, \omega)\text{ has the same distribution as
    } (\varphi, \omega).
  \end{equation}
\end{lemma}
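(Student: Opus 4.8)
The plan is to write the joint law of $(\varphi,\omega)$ as an explicit density against the reference measure $\bigl(\prod_{v\in V}\lambda_v\bigr)\otimes c$, with $c$ the counting measure on the finite set $\{0,1\}^E$, and then to check that this density is invariant under the map $T\colon(\varphi,\omega)\mapsto(\varphi^{\omega,x},\omega)$. The starting point is that for each fixed $\omega$ the map $\varphi\mapsto\varphi^{\omega,x}$ is an \emph{involution} of $S^V$: whether $x\xleftrightarrow{\omega}V_0$, and if not the $\omega$-cluster $\mathcal{C}=\mathcal{C}(x,\omega)$ of $x$, depend only on $\omega$, and on $\mathcal{C}$ the map is $\tau$ applied coordinatewise, an involution by \eqref{eq:tau_involution}. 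Abbreviating $h_e:=h_{\{v,w\}}$ and $p_e:=p_{\{v,w\}}$ for $e=\{v,w\}\in E$, and for $a,a'\in S$ setting
\[
  W_e^{(1)}(a,a'):=\bigl(h_e(a,a')-h_e(\tau a,a')\bigr)_+,\qquad
  W_e^{(0)}(a,a'):=\min\bigl(h_e(a,a'),h_e(\tau a,a')\bigr),
\]
one has $h_e(\varphi_v,\varphi_w)\,p_e(\varphi)=W_e^{(1)}(\varphi_v,\varphi_w)$ and $h_e(\varphi_v,\varphi_w)\,(1-p_e(\varphi))=W_e^{(0)}(\varphi_v,\varphi_w)$; with the stated conventions ($0/0:=1$, $t/0:=\infty$) these identities hold for all values, in particular on $\{h_e(\varphi_v,\varphi_w)=0\}$ where every term vanishes. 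Using only \eqref{eq:h_symmetric}, \eqref{eq:tau_involution}, \eqref{eq:tau_preserves_h} — in particular the relation $h_e(\tau a,a')=h_e(a,\tau a')$ already noted in the text — I would verify the two algebraic facts that drive the proof: (a) $W_e^{(b)}(\tau a,\tau a')=W_e^{(b)}(a,a')$ for $b\in\{0,1\}$ (flipping both endpoints changes no edge weight), and (b) $W_e^{(0)}(\tau a,a')=W_e^{(0)}(a,a')$ (flipping one endpoint of a closed edge changes nothing); moreover each $W_e^{(b)}$ is symmetric in its two arguments, so it is well indexed by the unordered edge.

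With these in place, for a bounded measurable $f\colon S^V\times\{0,1\}^E\to\mathbb{R}$ I would write, from \eqref{eq:measure_for_reflections} and \eqref{eq:omega_given_varphi} (the sum over the finite set $\{0,1\}^E$ freely interchanging with the integral),
\[
  \mathbb{E}\bigl[f(\varphi,\omega)\bigr]
  =\frac{1}{Z_{\lambda,h,G}}\int\sum_{\omega\in\{0,1\}^E}f(\varphi,\omega)\prod_{e=\{v,w\}\in E}W_e^{(\omega_e)}(\varphi_v,\varphi_w)\prod_{v\in V}d\lambda_v(\varphi_v).
\]
Now fix $\omega$ and change variables via the involution $\varphi\mapsto\varphi^{\omega,x}$. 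The coordinates that move lie in $\mathcal{C}\subseteq V\setminus V_0$ (when $x\centernot{\xleftrightarrow{\omega}}V_0$; otherwise the map is the identity), so by \eqref{eq:tau_preserves_lambda} the product measure $\prod_{v\in V}\lambda_v$ is preserved, and it remains to see the integrand is unchanged, i.e. $\prod_{e}W_e^{(\omega_e)}(\varphi^{\omega,x}_v,\varphi^{\omega,x}_w)=\prod_{e}W_e^{(\omega_e)}(\varphi_v,\varphi_w)$. I would check this edge by edge: an edge with both endpoints outside $\mathcal{C}$ is untouched; an edge with both endpoints in $\mathcal{C}$ has both coordinates replaced by their $\tau$-images, so (a) applies; and an edge with exactly one endpoint in $\mathcal{C}$ must have $\omega_e=0$ — otherwise the other endpoint would be $\omega$-joined to $x$ and hence lie in $\mathcal{C}$ — so (b) applies to its single flipped coordinate. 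Summing over $\omega$ and reversing the rewriting yields $\mathbb{E}[f(\varphi^{\omega,x},\omega)]=\mathbb{E}[f(\varphi,\omega)]$, which is \eqref{eq:flip_preserves_distribution}.

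I expect the substantive content to be exactly the per-edge analysis, and in particular the observation that an edge straddling the boundary of an $\omega$-cluster is necessarily $\omega$-closed; once identities (a) and (b) are isolated this is short. The remaining points are bookkeeping: keeping the degenerate conventions for $p_e$ consistent so the displayed density is literally correct and not merely valid $\prod_v\lambda_v$-almost everywhere (needed because a flipped boundary edge may land on $\{h_e=0\}$); noting $\tau$ is measurable so that $T$ is a measurable involution of the whole space and $T_*$ of the $\tau$-Edwards–Sokal law makes sense; and justifying the change of variables coordinatewise when the $\lambda_v$ are infinite measures, invoking \eqref{eq:tau_preserves_lambda} for $v\notin V_0$.
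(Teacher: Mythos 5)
Your proposal is correct and follows the same strategy as the paper's proof: express the joint density of $(\varphi,\omega)$ against $\bigl(\prod_v\lambda_v\bigr)\otimes c$, change variables by the involution $\varphi\mapsto\varphi^{\omega,x}$ for each fixed $\omega$, invoke \eqref{eq:tau_preserves_lambda} for measure preservation, and check invariance of the per-edge weight in the same three cases (both endpoints flipped, neither flipped, one flipped with the key observation that a boundary edge of the cluster must be $\omega$-closed). Your $W_e^{(\omega_e)}(\varphi_v,\varphi_w)$ is exactly the paper's $f_{\{v,w\}}(\varphi)$, and your isolated facts (a) and (b) are precisely the computations the paper carries out inline — in particular (b) is the paper's step $(*)$, since both sides of that equality equal $\min\bigl(h_{\{v,w\}}(\varphi_v,\varphi_w),h_{\{v,w\}}(\tau(\varphi_v),\varphi_w)\bigr)$.
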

Of course, the equality in distribution \eqref{eq:flip_preserves_distribution} implies also that $\varphi^{\omega, x}$ has the same distribution as $\varphi$, leading to a natural Markov chain on configurations. In the context of the spin $O(n)$ model (see also below), the fact that $\tau$ is applied to a \emph{single} connected component of $\omega$ in each update is one of the innovations introduced by Wolff in his pioneering work \cite{wolff1989collective}. We remark, however, that the equality in distribution \eqref{eq:flip_preserves_distribution} remains true when the vertex $x$ is chosen as a function of $\omega$. That is, for any function $x:\{0,1\}^E\to V$,
\begin{equation*}
  (\varphi^{\omega, x(\omega)}, \omega)\text{ has the same distribution as } (\varphi, \omega).
\end{equation*}
More generally, $(\varphi, \omega)$ has the same distribution as $(\psi, \omega)$ where $\omega$ determines whether $\psi=\varphi$ or $\psi=\varphi^{\omega,x(\omega)}$. These facts can be deduced in a straightforward manner from \eqref{eq:flip_preserves_distribution} itself. By composing several operations of this type one may define various other measure-preserving transformations. For instance, in a Swendsen-Wang-type update, one applies $\tau$ with
probability $1/2$ independently to each of the $\omega$-connected
components which do not intersect $V_0$. Another possibility, when
$V_0$ is a singleton, is to either apply $\tau$ to the
$\omega$-connected component of $x$, or to the complement of this
connected component, according to whether the component intersects
$V_0$. We use a variant of this latter choice in Section \ref{extremal_gradients_sec} below.

One typical use of the reflection transformation described above and
Lemma~\ref{lem:flip_preserves_ES_coupling} is to define a reversible
Markov chain (a `cluster algorithm') on the set of configurations
with stationary distribution $\mu_{\lambda, h, G}$. A step of this
chain starting at $\varphi$ is conducted by deciding on some $x\in
V$ (possibly randomly), then sampling $\omega:E\to\{0,1\}$ from the
conditional distribution \eqref{eq:omega_given_varphi} and finally
outputting $\varphi^{\omega, x}$. Such Markov chains sometimes mix
faster than the more traditional Glauber dynamics, especially near
critical points of the model, and are used in practice in
Monte-Carlo simulations of the model (e.g., for Ising and spin
$O(n)$ models following Swendsen-Wang
\cite{swendsen1987nonuniversal} and Wolff
\cite{wolff1989collective}. See \cite{ullrich2014swendsen},
\cite{guo2017random} for recent polynomial-time mixing bounds). Our
emphasis, however, will be on theoretical applications.

The reflection transformation is defined above on a finite graph. It is also natural to work on infinite graphs, with the configuration $\varphi$ sampled from a \emph{Gibbs measure}, which is specified on finite graphs by distributions of the form~\eqref{eq:measure_for_reflections}, and with the distribution of $\omega$ given $\varphi$ specified by~\eqref{eq:omega_given_varphi} and~\eqref{eq:p v w formula}. In this case it may be shown, using Lemma~\ref{lem:flip_preserves_ES_coupling}, that reflections of \emph{finite} $\omega$-connected components preserve the joint distribution of $(\varphi,\omega)$. This may fail, however, when reflecting \emph{infinite} $\omega$-connected components. Still, it may be shown that reflections of infinite $\omega$-connected components transform the distribution of~$\varphi$ to that of another Gibbs measure, with the same underlying specification, while the distribution of~$\omega$ given $\varphi$ continues to be specified by~\eqref{eq:omega_given_varphi} and~\eqref{eq:p v w formula} (in the context of the cluster swapping reflection applied to random surfaces, see below, this was shown by Sheffield~\cite{sheffield2005random}; see also~\cite[Lemma 4.2]{chandgotia2018delocalization}). We do not discuss this further here.

\begin{proof}[Proof of Lemma~\ref{lem:flip_preserves_ES_coupling}]
It is sufficient to prove that
\begin{equation}\label{eq:product_sets_probabilities}
  \P(\cap_{v\in V}\{\varphi^{\omega, x}_v\in A_v\}\cap\{\omega = \omega_0\}) = \P(\cap_{v\in V}\{\varphi_v\in A_v\}\cap\{\omega = \omega_0\}),
\end{equation}
for all choices of $A_v\in\S$ for $v\in V$ and $\omega_0:E\to\{0,1\}$. Fix such $(A_v)$ and $\omega_0$. For brevity, we introduce the notation
\[
f_{(v,w)}(\varphi):=h_{(v,w)}(\varphi_v,
\varphi_w)p_{\{v,w\}}(\varphi)^{\omega_0({\{v,w\}})}\left(1-p_{\{v,w\}}(\varphi)\right)^{1-\omega_0({\{v,w\}})},\quad (v,w)\in \vec{E},
\]
with $0^0:=1$. Our definitions yield the following formula for the right-hand side of \eqref{eq:product_sets_probabilities},
\begin{equation}\label{eq:right-hand_side_expression}
  \P(\cap_{v\in V}\{\varphi_v\in A_v\}\cap\{\omega = \omega_0\}) = \frac{1}{Z_{\lambda, h, G}}\int\prod_{v\in V} d\lambda_v(\varphi_v)\mathbf{1}_{\varphi_v\in A_v}\prod_{(v,w)\in \vec{E}} f_{(v,w)}(\varphi).
\end{equation}
We proceed to evaluate the left-hand side of \eqref{eq:product_sets_probabilities}. Define $V_{\omega_0,x}$ to be the set of vertices on which $\tau$ is applied in the definition of $\varphi^{\omega_0,x}$. Precisely,
\[
\text{If $x \xleftrightarrow{\omega} V_0$ then $V_{\omega_0,x}:=\emptyset$. Otherwise, $V_{\omega_0,x} := \{v\in V\,\colon\, x
\xleftrightarrow{\omega} v\}$}.
\]
With this definition,
\begin{equation*}
  \bigcap_{v\in V}\{\varphi^{\omega, x}_v\in A_v\}\cap\{\omega = \omega_0\} = \bigcap_{v\in V_{\omega_0,x}}\{\tau(\varphi_v)\in A_v\}\bigcap_{v\in V\setminus V_{\omega_0,x}}\{\varphi_v\in A_v\}\cap\{\omega = \omega_0\},
\end{equation*}
whence the left-hand side of \eqref{eq:product_sets_probabilities} satisfies
\begin{multline*}
  \P(\cap_{v\in V}\{\varphi^{\omega, x}_v\in A_v\}\cap\{\omega = \omega_0\})\\
  = \frac{1}{Z_{\lambda, h, G}}\int\prod_{v\in V} d\lambda_v(\varphi_v)\prod_{v\in V_{\omega_0,x}}\mathbf{1}_{\tau(\varphi_v)\in A_v}\prod_{v\in V\setminus V_{\omega_0,x}}\mathbf{1}_{\varphi_v\in A_v}\prod_{(v,w)\in \vec{E}} f_{(v,w)}(\varphi).
\end{multline*}
To simplify the last integral, we make the change of variables $\varphi\mapsto\psi$ where
\begin{equation*}
  \psi_v := \begin{cases}
    \tau(\varphi_v) & v\in V_{\omega_0, x}\\
    \varphi_v& v\notin V_{\omega_0, x}
  \end{cases}.
\end{equation*}
This mapping is one-to-one as $\tau$ is invertible by \eqref{eq:tau_involution}. In addition, it preserves the measure $\prod_{v\in V} \lambda_v$ by \eqref{eq:tau_preserves_lambda} and the fact that $V_0\cap V_{\omega_0, x}=\emptyset$. Thus,
\begin{equation}\label{eq:left-hand_side_expression}
  \P(\cap_{v\in V}\{\varphi^{\omega, x}_v\in A_v\}\cap\{\omega = \omega_0\}) = \frac{1}{Z_{\lambda, h, G}}\int\prod_{v\in V} d\lambda_v(\psi_v)\mathbf{1}_{\psi_v\in A_v}\prod_{(v,w)\in \vec{E}} f_{(v,w)}(\varphi).
\end{equation}
Comparing \eqref{eq:right-hand_side_expression} and \eqref{eq:left-hand_side_expression} we see that \eqref{eq:product_sets_probabilities} is a consequence of
\begin{equation}\label{eq:equality_of_f_factors}
  f_{(v,w)}(\varphi) = f_{(v,w)}(\psi)
\end{equation}
for all $(v,w)\in \vec{E}$. The equality \eqref{eq:equality_of_f_factors} is trivial in the case that $v,w\notin V_{\omega_0,x}$. In the case that $v,w\in V_{\omega_0,x}$ it follows from \eqref{eq:tau_preserves_h} by using that
\begin{align*}
  &h_{(v,w)}(\psi_v, \psi_w) = h_{(v,w)}(\tau(\varphi_v),
\tau(\varphi_w)) = h_{(v,w)}(\varphi_v, \varphi_w),\\
  &p_{\{v,w\}}(\psi) = \max\left(1 - \frac{h_{(v,w)}(\tau(\tau(\varphi_v)),
  \tau(\varphi_w))}{h_{(v,w)}(\tau(\varphi_v), \tau(\varphi_w))}, 0\right) = \max\left(1 -
  \frac{h_{(v,w)}(\tau(\varphi_v),
  \varphi_w)}{h_{(v,w)}(\varphi_v, \varphi_w)}, 0\right) = p_{\{v,w\}}(\varphi).
\end{align*}
Lastly, in the case that, say, $v\in V_{\omega_0, x}$ and $w\notin V_{\omega_0, x}$, using now the involution property \eqref{eq:tau_involution},
\begin{align*}
  f_{(v,w)}(\psi) &= h_{(v,w)}(\psi_v,\psi_w)(1-p_{\{v,w\}}(\psi)) = h_{(v,w)}(\tau(\varphi_v),\varphi_w)\min\left(\frac{h_{(v,w)}(\tau(\tau(\varphi_v)),
  \varphi_w)}{h_{(v,w)}(\tau(\varphi_v), \varphi_w)}, 1\right)\\
   &= h_{(v,w)}(\tau(\varphi_v),\varphi_w)\min\left(\frac{h_{(v,w)}(\varphi_v,
  \varphi_w)}{h_{(v,w)}(\tau(\varphi_v), \varphi_w)}, 1\right) \stackrel{(*)}{=} h_{(v,w)}(\varphi_v,\varphi_w)\min\left(\frac{h_{(v,w)}(\tau(\varphi_v),
  \varphi_w)}{h_{(v,w)}(\varphi_v, \varphi_w)}, 1\right)\\
   &= h_{(v,w)}(\varphi_v,\varphi_w)(1-p_{\{v,w\}}(\varphi)) = f_{(v,w)}(\varphi),
\end{align*}
where the equality $(*)$ follows by separately considering the two cases $h_{(v,w)}(\tau(\varphi_v),\varphi_w)\ge h_{(v,w)}(\varphi_v, \varphi_w)$ and $h_{(v,w)}(\tau(\varphi_v),\varphi_w)<h_{(v,w)}(\varphi_v, \varphi_w)$. This finishes the proof of the lemma.
\end{proof}

We now illustrate the general construction above with specific examples:

{\bf Potts model:} The $q$-state Potts model (with free boundary
conditions) is obtained by taking $S = \{1,2,\ldots, q\}$ (with the
discrete sigma algebra), all $\lambda_v$ equal to the counting
measure, $V_0=\emptyset$ and
\begin{equation*}
  h_{\{v,w\}}(a,b) = \exp(\beta \delta_{a,b}),\quad \{v,w\}\in E,\,\, a,b\in
  S,
\end{equation*}
where $\delta_{a,b}$ is the Kronecker delta function. Let $\tau:S\to
S$ be any involution. One checks simply that $\tau$ satisfies
\eqref{eq:tau_preserves_lambda} and \eqref{eq:tau_preserves_h} so
that $\tau$ is a reflection.

The model is called ferromagnetic when the parameter $\beta$ is non-negative. In this case, the general prescription \eqref{eq:omega_given_varphi} for the
distribution of $\omega$ becomes
\begin{equation*}
  \P(\omega_{\{v,w\}} = 1\, |\, \varphi) = \begin{cases}
    1 - \exp(-\beta) & \varphi_v = \varphi_w\text{ and }\tau(\varphi_v)\neq \varphi_v\\
    0 & \text{otherwise}
  \end{cases}.
\end{equation*}
In particular, the value of $\varphi$ is constant on each
$\omega$-connected component.
Applying $\tau$ to the values of $\varphi$ on $\omega$-connected
components leads to a variant of the Swendsen-Wang Markov chain (the original Markov chain is obtained by setting $\omega_{\{v,w\}}=1$ with probability $1-\exp(-\beta)$, independently, for each $\{v,w\}\in E$ with $\varphi_v = \varphi_w$ and $\omega_{\{v,w\}}=0$ for other edges. Then updating $\varphi$ by assigning it a uniform value in $S$ on each $\omega$-connected component, independently. In this case the marginal distribution of $\omega$ is explicit and given by the $q$-random cluster model).

The model is called anti-ferromagnetic when $\beta$ is negative. In this case, the general prescription \eqref{eq:omega_given_varphi} becomes
\begin{equation*}
  \P(\omega_{\{v,w\}} = 1\, |\, \varphi) = \begin{cases}
    1 - \exp(\beta) & \tau(\varphi_v) = \varphi_w\text{ and }\varphi_v\neq \varphi_w\\
    0 & \text{otherwise}
  \end{cases}.
\end{equation*}
In particular, on each path in the subgraph given by
$\omega^{-1}(1)$ the value of $\varphi$ alternates between two
distinct values $a,b\in S$ with $b = \tau(a)$. We note in passing
that in the limiting case $\beta = -\infty$, corresponding to
$\varphi$ being a uniformly sampled proper $q$-coloring, the
$\omega$-connected components are exactly the Kempe chains of the
pairs $a,b\in S$ with $b = \tau(a)$. In the anti-ferromagnetic case,
applying $\tau$ to the values of $\varphi$ on $\omega$-connected
components leads to the Wang-Swendsen-Koteck\'y \cite{wang1989antiferromagnetic} Markov
chain.

{\bf Random surfaces:} The random surface measure $\mu_{U,G,V_0}$
defined in \eqref{eq:random_surface_measure}, having potential $U$
and normalized to be $0$ on $V_0$, fits the framework
\eqref{eq:measure_for_reflections} by taking $S = \mathbb{R}$ (with
the Borel sigma algebra), $\lambda_v$ to be Lebesgue measure for
$v\notin V_0$, $\lambda_v = \delta_0$ for $v\in V_0$ and
$h_{\{v,w\}}(a,b) = \exp(-U(a-b))$ for all $\{v,w\}\in E$.

For each $m\in\mathbb{R}$ let $\tau_m:\mathbb{R}\to\mathbb{R}$ be
the `reflection around $m$' mapping. That is,
\begin{equation}\label{eq:random_surfaces_reflection}
\tau_m(a) = 2m - a.
\end{equation}
It is straightforward to check the conditions
\eqref{eq:tau_involution}, \eqref{eq:tau_preserves_lambda} and
\eqref{eq:tau_preserves_h}, using that $U(x) = U(-x)$ for all $x$, and conclude that $\tau_m$ is a
reflection for any random surface measure $\mu_{U,G,V_0}$. In this
case, the general prescription \eqref{eq:omega_given_varphi} for the
distribution of $\omega$ becomes
\begin{equation}\label{eq:random_surface_omega_probability}
  \P(\omega_{\{v,w\}} = 1\, |\, \varphi) = \max\left(1 -
  \exp(U(\varphi_v - \varphi_w) - U(2m - \varphi_v - \varphi_w)),\, 0\right).
\end{equation}
The following consequence plays a central role in the proofs of our
main theorems:
\begin{equation}\label{eq:random_surface_omega_connection_restriction}
\begin{split}
  &\text{If $U$ is monotone in the sense of
\eqref{eq:monotonicity_condition_surfaces} then, almost surely, on each $\omega$-connected component $C$}\\
  &\text{either
$\varphi_v>m$ for all $v\in C$, or $\varphi_v<m$ for all $v\in
C$, or $C = \{v\}$ and $\varphi_v = m$.}
\end{split}
\end{equation}
This follows from \eqref{eq:random_surface_omega_probability} by
noting that if $U$ is monotone and $\varphi_v\ge m\ge \varphi_w$
then $\varphi_v - \varphi_w \ge |2m-\varphi_v-\varphi_w|$ whence
$U(\varphi_v - \varphi_w) \ge U(2m - \varphi_v - \varphi_w)$.

Extensions of the above ideas to integer-valued random surfaces
(when a measure on $\varphi:S\to\mathbb{Z}$ is defined analogously)
follow in a similar manner, with the reflection height $m$
restricted to $\mathbb{Z}\cup(\mathbb{Z}+\frac{1}{2})$. The reflection principle for simple random walk (see \eqref{eq:reflection_principle}) can be seen as a reflection transformation of an integer-valued random surface on a one-dimensional graph (see also the discussion of reflection transformations for Markov chains below). Markov chain
algorithms for simulating random surfaces based on the above ideas
were developed by Evertz, Hasenbusch, Lana, Marcu, Pinn and Solomon
\cite{evertz1991stochastic, hasenbusch1992cluster}.

We are not aware of a simple expression for the marginal
distribution of $\omega$.

{\bf Spin $O(n)$ model:} The spin $O(n)$ measure $\mu_{U,G,n,V_0}$
defined in \eqref{eq:spin_O_n_measure_boundary_conditions}, having integer $n\ge 1$, potential $U$
and normalized to equal $e_1=(1,0,\ldots,0)$ on $V_0$, fits the framework \eqref{eq:measure_for_reflections} by taking $S =
\mathbb{S}^{n-1}\subseteq\mathbb{R}^n$ (with the sigma algebra
inherited from $\mathbb{R}^n$), $d\lambda_v = d\varphi_v$ for $v\in V\setminus V_0$, $\lambda_v = \delta_{e_1}$ for $v\in V_0$ and $h_{\{v,w\}}(a,b)
= \exp(-U(\langle a,b\rangle))$ for all $\{v,w\}\in E$.

For each $a\in\mathbb{S}^{n-1}$ let
$\tau_a:\mathbb{S}^{n-1}\to\mathbb{S}^{n-1}$ be the `reflection
around the hyperplane orthogonal to $a$' mapping. That is,
\begin{equation}\label{eq:spin_system_reflection}
\tau_a(b) = b - 2\langle a,b\rangle\cdot a.
\end{equation}
It is straightforward
to check the conditions \eqref{eq:tau_involution},
\eqref{eq:tau_preserves_lambda} and \eqref{eq:tau_preserves_h}, verifying along the way that $\tau_a$ is an isometry, and
conclude that $\tau_a$ is a reflection for any spin $O(n)$ measure
$\mu_{U,G,n,V_0}$. In this case, the general prescription
\eqref{eq:omega_given_varphi} for the distribution of $\omega$
becomes
\begin{equation}\label{eq:spin_On_omega_probability}
  \P(\omega_{\{v,w\}} = 1\, |\, \varphi) = \max\left(1 -
  \exp(U(\langle \varphi_v,\, \varphi_w\rangle) - U(\langle\varphi_v - 2\langle a,\varphi_v\rangle a,\, \varphi_w\rangle)),\, 0\right).
\end{equation}
Again, the following consequence plays a central role in our proof
of Theorem~\ref{thm:monotonicity_of_spin_O_n_density}:
\begin{equation}\label{spin_one_sided_reflection}
\begin{split}
  &\text{If $U$ is non-increasing in the sense of
\eqref{eq:U_monotone_spins} then, almost surely, on each $\omega$-connected component $C$}\\
  &\text{either
$\langle a,\varphi_v\rangle> 0$ for all $v\in C$, or $\langle
a,\varphi_v\rangle< 0$ for all $v\in C$, or $C = \{v\}$ and $\langle a,\varphi_v\rangle = 0$.}
\end{split}
\end{equation}
This follows from \eqref{eq:spin_On_omega_probability} by noting
that if $U$ is non-increasing and $\langle a,\varphi_v\rangle\langle
a,\varphi_w\rangle\le 0$ then $\langle \varphi_v,\, \varphi_w\rangle
\le \langle\varphi_v - 2\langle a,\varphi_v\rangle a,\,
\varphi_w\rangle$ whence $U(\langle \varphi_v,\, \varphi_w\rangle)
\ge U(\langle\varphi_v - 2\langle a,\varphi_v\rangle a,\,
\varphi_w\rangle)$.

Extensions of the above ideas to clock models (when $\varphi$ takes
values in a set of $q$ equally-spaced marks on $\mathbb{S}^1$)
follow in a similar manner, with the vector $a$ restricted to be one
of the marks or exactly in between two marks. Wolff's algorithm
\cite{wolff1989collective} pioneered the use of the above ideas to
fast simulation algorithms for the spin $O(n)$ model.

We are not aware of a simple expression for the marginal
distribution of $\omega$. Nevertheless, Chayes \cite{chayes1998discontinuity},
Chayes-Campbell \cite{campbell1998isotropic}, following Chayes-Machta \cite{chayes1997graphical, chayes1998graphical}, have
considered the standard spin $O(n)$ model with $n\in \{2,3\}$ and
proved that the distribution of $\omega$ has positive association
(every two monotone increasing functions of $\omega$ are
non-negatively correlated), that an infinite $\omega$-connected
component (in a suitable infinite-volume limit) arises if and only
if there is positive magnetization in the spin model and related
results.

{\bf Cluster swapping:} The term cluster
swapping was coined by Sheffield \cite[Chapter
8]{sheffield2005random} for the following setup, in the special
setting of random surfaces. A related swapping idea was used by van den Berg \cite{van1993uniqueness} to study uniqueness of Gibbs measures (see also van den Berg and Steif~\cite[Proof of Theorem 2.4]{van1994percolation}). Let $V_0\subseteq V$. Let $\mu_{\lambda^1, h, G}$, $\mu_{\lambda^2, h, G}$ be general measures of the type
\eqref{eq:measure_for_reflections}, with the same $h$ and with
$\lambda^1_v = \lambda^2_v$ for all $v\in V\setminus V_0$. Let
$\varphi^1, \varphi^2$ be independent samples from $\mu_{\lambda^1,
h, G}$ and $\mu_{\lambda^2, h, G}$ respectively. We regard the pair
$(\varphi^1, \varphi^2)$ as a configuration $(\varphi^1,
\varphi^2):V\to S\times S$ which is sampled from the measure
$\mu_{\lambda^1\times\lambda^2, h\times h, G}$, where
\begin{equation*}
  (\lambda^1\times\lambda^2)_v := \lambda^1_v \times \lambda^2_v,\quad v\in V
\end{equation*}
and
\begin{equation*}
  (h\times h)_{(v,w)}((a_1, a_2), (b_1, b_2)) := h_{(v,w)}(a_1,
  b_1)\cdot h_{(v,w)}(a_2, b_2),\quad (v,w)\in \vec{E},\,\,
  a_1,a_2,b_1,b_2\in S.
\end{equation*}
Let $\tau:S\times S\to S\times S$ be the `swap' mapping, defined by
\begin{equation}\label{eq:cluster_swapping_reflection}
\tau((a_1, a_2)) = (a_2, a_1).
\end{equation}
It is straightforward to check the
conditions \eqref{eq:tau_involution},
\eqref{eq:tau_preserves_lambda} and \eqref{eq:tau_preserves_h} and
conclude that $\tau$ is a reflection for
$\mu_{\lambda^1\times\lambda^2, h\times h, G}$. In this case, the
general prescription \eqref{eq:omega_given_varphi} for the
distribution of $\omega$ becomes
\begin{equation}\label{eq:cluster_swap_omega_probability}
  \P(\omega_{\{v,w\}} = 1\, |\, \varphi) = \max\left(1 -
  \frac{h_{(v,w)}(\varphi^2_v, \varphi^1_w)\cdot h_{(v,w)}(\varphi^1_v,\varphi^2_w)}{h_{(v,w)}(\varphi^1_v, \varphi^1_w)\cdot
  h_{(v,w)}(\varphi^2_v,\varphi^2_w)},\, 0\right).
\end{equation}
We see that, as remarked before, if $\varphi^1_v = \varphi^2_v$ then
$\omega_{\{v,w\}} = 0$ for all $w$ adjacent to $v$. Thus, an
$\omega$-connected component is `blocked' by places where the two
coordinates of $(\varphi^1, \varphi^2)$ are equal. This observation
is closely related to the theorem of van den Berg \cite[Theorem 1]{van1993uniqueness} showing the equality of two Gibbs measures under the assumption that
there is no `disagreement percolation' between independent samples
from the measures.

Sheffield \cite[Lemma 8.1.3]{sheffield2005random} made an additional
important observation in the context of random surfaces. The setup considered there allows for both integer-valued and real-valued surfaces, and also for surfaces whose potential $U$ is not required to satisfy the restriction $U(x)=U(-x)$ (allowing to introduce a slope to the surface). We explain Sheffield's observation in the real-valued case: Take $S = \mathbb{R}$, $\lambda^1_v, \lambda_v^2$ to be Lebesgue
measure for $v\in V\setminus V_0$ and $h_{(v,w)}(a,b) =
\exp(-U(a-b))$ for a measurable $U:\mathbb{R}\to(-\infty,\infty]$ and all $(v,w)\in \vec{E}$. Then,
\begin{equation*}
\begin{split}
  &\text{if $U$ is convex then, almost surely, on each $\omega$-connected component $C$}\\
  &\text{either $\varphi^1_v> \varphi^2_v$ for all $v\in C$, or $\varphi^1_v<
\varphi^2_v$ for all $v\in C$, or $C = \{v\}$ and $\varphi_v^1 = \varphi^2_v$.}
\end{split}
\end{equation*}
This follows from \eqref{eq:cluster_swap_omega_probability} by
noting that,
\begin{equation*}
  \frac{h_{(v,w)}(\varphi^2_v, \varphi^1_w)\cdot h_{(v,w)}(\varphi^1_v,\varphi^2_w)}{h_{(v,w)}(\varphi^1_v, \varphi^1_w)\cdot
  h_{(v,w)}(\varphi^2_v,\varphi^2_w)} = \exp(U(\varphi_v^1 -
  \varphi_w^1) + U(\varphi_v^2 - \varphi_w^2) - U(\varphi_v^2 -
  \varphi_w^1) - U(\varphi_v^1 - \varphi_w^2)).
\end{equation*}
Writing
\begin{align*}
  &\varphi_v^2 - \varphi_w^1 = p(\varphi_v^1 - \varphi_w^1) +
  (1-p)(\varphi_v^2 - \varphi_w^2),\\
  &\varphi_v^1 - \varphi_w^2 = (1-p)(\varphi_v^1 - \varphi_w^1) +
  p(\varphi_v^2 - \varphi_w^2)
\end{align*}
where
\begin{equation*}
  p:=\frac{\varphi_w^2 - \varphi_w^1}{\varphi_v^1 - \varphi_v^2 +
\varphi_w^2 - \varphi_w^1},
\end{equation*}
so that if either $\varphi_v^1 \ge \varphi_v^2$ and $\varphi_w^1 \le
\varphi_w^2$, or $\varphi_v^1 \le \varphi_v^2$ and $\varphi_w^1 \ge
\varphi_w^2$ (but at least one inequality is strict) then
$p\in[0,1]$ whence convexity of $U$ implies that
\begin{equation*}
  U(\varphi_v^2 - \varphi_w^1) + U(\varphi_v^1 - \varphi_w^2) \le U(\varphi_v^1 -
  \varphi_w^1) + U(\varphi_v^2 - \varphi_w^2).
\end{equation*}
Sheffield \cite{sheffield2005random} used the cluster swapping idea in his investigation of the
translation-invariant gradient Gibbs measures of random surfaces, in the real- and
integer-valued cases,
with convex potentials (following Funaki-Spohn \cite{funaki1997motion} for the
real-valued case). Along the way, he uses cluster swapping in a
beautifully simple manner to obtain monotnicity in boundary
conditions and log-concavity of the single-site marginal
distributions in random surfaces with convex potentials~\cite[Section 8.2]{sheffield2005random}.

Lastly, we remark that cluster swapping may sometimes be given a fruitful interpretation as the swapping of disagreement sets between configurations of an extended model. Such an interpretation was provided in~\cite{aizenman2019exponential} for the random-field Ising model where, rather than defining $\omega$, configurations of the model were extended to continuous functions on the metric graph - the metric space obtained by replacing edges with continuous segments - and swaps were made on connected components of the disagreement set of two extended configurations.

{\bf Markov chains:} We point out that Markov chains in discrete time also fit the framework~\eqref{eq:measure_for_reflections}. For simplicity, we discuss the case where the state space of the chain is countable. Let $S$ be a finite or countable set (with the discrete sigma algebra) and let $P = (P(a,b))_{a,b\in S}$ be the transition probabilities for a Markov chain on $S$. Thus we assume that
\begin{equation*}
  \text{$P(a,b)\ge 0$ for all $a,b\in S$ and $\sum_{b\in S} P(a,b) = 1$ for all $a\in S$}.
\end{equation*}

A finite sequence $X_0,X_1, \ldots, X_n$ of $S$-valued random variables is a sample of the Markov chain if
\begin{equation}\label{eq:Markov_chain_sample}
  \P(X_0 = x_0, X_1 = x_1,\ldots, X_n = x_n) = \mu(x_0)P(x_0, x_1)\cdots P(x_{n-1}, x_n),
\end{equation}
for some initial probability distribution $\mu$ on $S$. The distribution of the sequence thus fits the framework~\eqref{eq:measure_for_reflections} with the graph $G = (V,E)$ having $V = \{0,1,\ldots, n\}$ and $E = \{\{0,1\}, \ldots, \{n-1,n\}\}$, with the single-site measures $\lambda_0 := \mu$ and $\lambda_j$ being the counting measure on $S$ for $1\le j\le n$ and the interaction functions $h_{(j-1,j)} = P$ for $1\le j\le n$.

Thus, if one has in hand a reflection $\tau:S\to S$, one may apply the general reflection transformation procedure to the Markov chain. In this setting, letting $V_0 := \{0\}$, the conditions \eqref{eq:tau_involution}, \eqref{eq:tau_preserves_lambda} and \eqref{eq:tau_preserves_h} defining a reflection translate to $\tau(\tau(a)) = a$ for $a\in S$ and $P(\tau(a),\tau(b)) = P(a,b)$ for $a,b\in S$. As already mentioned, the reflection principle for simple random walk (see \eqref{eq:reflection_principle}) can be seen as a special case of this setup.

\medskip
In most of the discussion above, the functions $h_{(v,w)}$ were chosen
the same for all $(v,w)\in \vec{E}$. We point out that non-homogeneous setups, with $h_{(v,w)}$ depending on $(v,w)$, may arise even in the investigation of homogeneous models, as in the discussion at the end of Section~\ref{sec:monotonicity_of_pair_correlations}. The use of non-homogeneous $(\lambda_v)$ may also be natural in some, otherwise homogeneous, contexts. For instance, the height function of the dimer model on the triangular lattice has the modulo $3$ of the heights fixed on each of the three sub-lattices, and this restriction may be imposed with a suitable choice of $(\lambda_v)$ (or, alternatively, with a suitable choice of $(h_{(v,w)})$).

The usefulness of the above discussion is demonstrated in the next sections where we prove our main theorems, with the proofs in Section~\ref{sec:sublevel_set_connectivity} and Section~\ref{sec:monotonicity_of_pair_correlations} being particularly short.

\section{Sublevel set connectivity for random surfaces}\label{sec:sublevel_set_connectivity}
In this section we prove Theorem~\ref{thm:reflection_principle}. As remarked there, the lower bound in \eqref{eq:barrier_ineq} is trivial so we focus here on
proving the upper bound and \eqref{eq:Lipschitz_barrier_ineq}.

As in the theorem, let $G=(V,E)$ be a finite connected graph, let $V_0\subseteq V$ be
non-empty, let $U$ be a potential satisfying the monotonicity
condition \eqref{eq:monotonicity_condition_surfaces} and the
assumption \eqref{eq:partition_function_condition} that
$\mu_{U,G,V_0}$ is well-defined. Let $\varphi$ be randomly sampled
from $\mu_{U,G,V_{0}}$. We first prove a more general inequality than the upper bound in \eqref{eq:barrier_ineq},
\begin{equation}\label{eq:barrier_ineq_more_general}
  \mathbb{P}(V_0\centernot{\xleftrightarrow{\varphi< m}} v, \varphi_v\in D)\le
\mathbb{P}(\varphi_v\in 2m-D)\quad \text{for all $v\in V,
  m\ge 0$ and Borel $D\subseteq\mathbb{R}$},
\end{equation}
where $2m-D:=\{2m-t\,\colon\,t\in D\}$. The upper bound in \eqref{eq:barrier_ineq} follows since
\begin{equation*}
  \mathbb{P}(V_0\centernot{\xleftrightarrow{\varphi< m}} v) = \mathbb{P}(V_0\centernot{\xleftrightarrow{\varphi< m}} v, \varphi_v<m) +
  \mathbb{P}(\varphi_v\ge m) \le 2\mathbb{P}(\varphi_v\ge m) = \mathbb{P}(|\varphi_v|\ge m).
\end{equation*}
We proceed to prove \eqref{eq:barrier_ineq_more_general}. Let $\tau_m$ be the reflection introduced in \eqref{eq:random_surfaces_reflection}. Let
$\omega:E\to\{0,1\}$ be sampled according to \eqref{eq:random_surface_omega_probability} and note that, as a consequence of
\eqref{eq:random_surface_omega_connection_restriction}, we have
\begin{equation*}
  \{V_0\centernot{\xleftrightarrow{\varphi< m}} v, \varphi_v\in D\}\subseteq\{v\centernot{\xleftrightarrow{\omega}}V_0\}\quad\pmod\P
\end{equation*}
where we write $A_1\subseteq A_2\pmod\P$, for events $A_1,A_2$, to indicate that $\P(A_1\setminus A_2)=0$. Thus
\begin{equation*}
  \{V_0\centernot{\xleftrightarrow{\varphi< m}} v, \varphi_v\in D\}\subseteq\{\varphi^{\omega,v}_v \in 2m-D\}\quad\pmod\P
\end{equation*}
where we recall the definition of $\varphi^{\omega, v}$ from \eqref{eq:flipped_configuration}. Thus,
\begin{equation*}
  \P(V_0\centernot{\xleftrightarrow{\varphi< m}} v, \varphi_v\in D)\le \P(\varphi^{\omega,v}_v \in 2m-D) = \P(\varphi_v\in 2m-D),
\end{equation*}
where the last equality follows from Lemma~\ref{lem:flip_preserves_ES_coupling}. This establishes \eqref{eq:barrier_ineq_more_general}.

Now suppose, additionally, that $U$ satisfies the finite-support condition
\eqref{eq:Lipschitz_condition_surfaces}. We again prove a more general inequality than needed,
\begin{equation}\label{eq:Lipschitz_barrier_ineq_more_general}
  \mathbb{P}(V_0\centernot{\xleftrightarrow{\varphi< m}} v, \varphi_v\in D)\ge
\mathbb{P}(\varphi_v\in 2m+1-D)\quad \text{for all $v\in V,
  m\ge 0$ and Borel $D\subseteq(-\infty, m]$}.
\end{equation}
The inequality \eqref{eq:Lipschitz_barrier_ineq} is a consequence of \eqref{eq:Lipschitz_barrier_ineq_more_general} and the symmetry of $\varphi_v$ as
\begin{equation*}
\begin{split}
  \mathbb{P}(V_0\centernot{\xleftrightarrow{\varphi< m}} v) &= \mathbb{P}(V_0\centernot{\xleftrightarrow{\varphi< m}} v, \varphi_v<m) +
  \mathbb{P}(\varphi_v\ge m) \ge \mathbb{P}(\varphi_v>m+1)+\mathbb{P}(\varphi_v\ge m)\\
  &= \mathbb{P}(|\varphi_v|\ge m) - \P(\varphi_v\in(m,m+1)).
\end{split}
\end{equation*}
To see \eqref{eq:Lipschitz_barrier_ineq_more_general}, we now consider the reflection $\tau_{m+\frac{1}{2}}$. Again, we let $\omega:E\to\{0,1\}$ be sampled
according to \eqref{eq:random_surface_omega_probability} (with respect to $\tau_{m+\frac{1}{2}}$) and note that, by
\eqref{eq:random_surface_omega_connection_restriction},
\begin{equation*}
  \{\varphi_v\ge m+1\}\subseteq\{v\centernot{\xleftrightarrow{\omega}}V_0\}\quad\pmod\P.
\end{equation*}
Additionally, the finite-support condition \eqref{eq:Lipschitz_condition_surfaces} implies that on the event $\{\varphi_v\ge m+1\}$ any path connecting $V_0$
and $v$ must pass through a vertex $w$ on which $\varphi_w\in [m,m+1]$. Since $\tau_{m+\frac{1}{2}}$ preserves the interval $[m, m+1]$ we conclude that
\begin{equation*}
  \{\varphi_v\ge m+1\}\subseteq\{V_0\centernot{\xleftrightarrow{\varphi^{\omega, v}< m}} v\}\quad\pmod\P.
\end{equation*}
Altogether, we conclude that, for each Borel $D\subseteq(-\infty,m]$,
\begin{equation*}
  \mathbb{P}(\varphi_v\in 2m+1-D) \le \mathbb{P}(V_0\centernot{\xleftrightarrow{\varphi^{\omega, v}< m}} v, \varphi^{\omega,v}_v\in D) =
  \mathbb{P}(V_0\centernot{\xleftrightarrow{\varphi< m}} v, \varphi_v\in D)
\end{equation*}
where we used Lemma~\ref{lem:flip_preserves_ES_coupling} in the last step. This concludes the proof of \eqref{eq:Lipschitz_barrier_ineq_more_general}.

\section{Monotonicity of pair correlations in spin $O(n)$ models}\label{sec:monotonicity_of_pair_correlations}
In this section we prove Theorem~\ref{thm:monotonicity_of_spin_O_n_density}. Let $v\in V\setminus V_0$. Let $b_1,b_2\in \mathbb{S}^{n-1}$ be such that $\langle b_1,e_1\rangle \geq\langle b_2, e_1\rangle$ and $b_1\neq  b_2$.   Define
\[
a := \frac{b_2-b_1}{||b_2-b_1||}
\]
 and consider the reflection along the hyperplane orthogonal to $a$, as defined in \eqref{eq:spin_system_reflection},
\[
\tau_a(b)=b-2\langle a,b\rangle\cdot a.
\]
As explained, $\tau_a$ is a reflection for the spin $O(n)$ model with potential $U$. In addition $\tau_a(b_2) = b_1$.
Moreover, we observe that
\begin{equation}\label{one_side}
\langle  e_1,a\rangle = ||b_2-b_1||^{-1}(\langle e_1,  b_2\rangle - \langle e_1,  b_1\rangle)\le 0,
\end{equation}
and that
\begin{equation*}
\langle b_2,a\rangle =||b_2-b_1||^{-1}(\langle b_2,b_2\rangle - \langle b_2,b_1\rangle)> 0.
\end{equation*}
Choose $r>0$ sufficiently small that
\begin{equation}\label{other_side}
\langle b, a\rangle > 0\quad\text{for all } b\in B(b_2,r),
\end{equation}
where we write $B(b_2, r):=\{b\in\mathbb{R}^n\,|\,\|b-b_2\|_2\le r\}$ for the closed Euclidean ball of radius $r$ around $b$.

As before, we let $\omega:E\to\{0,1\}$ be sampled
according to \eqref{eq:spin_On_omega_probability} (with respect to $\tau_a$). The relations \eqref{one_side} and \eqref{other_side} combined with the property
\eqref{spin_one_sided_reflection} imply that
\begin{equation*}
  \{\varphi_v\in B(b_2, r)\}\subseteq\{v\centernot{\xleftrightarrow{\omega}}V_0\}\quad\pmod\P.
\end{equation*}
Consequently, recalling the definition of $\varphi^{\omega, v}$ from \eqref{eq:flipped_configuration} and the fact that $\tau_a(b_2)=b_1$ and $\tau_a$ is an isometry,
\[
\{\varphi_v\in B(b_2, r) \} \subseteq \{\varphi^{\omega, v}_v\in B(b_1, r)\}\quad\pmod\P.
\]
Since $\varphi^{\omega,v}$ has the same distribution as $\varphi$, we have
\[
\mathbb{P}\left(\varphi_v\in B(b_2,r)\right)\le \mathbb{P}\left(\varphi^{\omega,v}_v\in B(b_1,r)\right) = \mathbb{P}\left(\varphi_v\in B(b_1,r)\right).
\]
Fix a version $d_v$ of the density of $\varphi_v$ with respect to $\mu_{\mathbb{S}^{n-1}}$, which we assume is rotationally symmetric in the sense that $d_v(b)$ is a function of $\langle b, e_1\rangle$. As
\begin{equation}\label{eq:density_limit}
  \frac{\mathbb{P}\left(\varphi_v\in B(b,r)\right)}{\mu_{\mathbb{S}^{n-1}}(B(b,r))}\to d_v(b)\quad\text{as $r\downarrow0$}
\end{equation}
for $\mu_{\mathbb{S}^{n-1}}$-almost every $b$, we conclude that $d_v(b)$ is a monotone increasing function of $\langle b, e_1\rangle$ except on the $\mu_{\mathbb{S}^{n-1}}$-null set of $b$'s for which the convergence in \eqref{eq:density_limit} fails. We may then redefine $d_v(b)$ on this null set to make $d_v(b)$ monotone for all $b$, as we wanted to show.

\medskip
We may prove Theorem~\ref{thm:non-increasing_density}, regarding the monotonicity of marginal densities in random surfaces, in exactly the same manner, replacing the reflection $\tau_a$ on $\mathbb{S}^{n-1}$ by the reflection $\tau_m$ on $\mathbb{R}$. For variety (and possible interest in other contexts), we briefly explain an alternative route to the proof of Theorem~\ref{thm:non-increasing_density} via convexity considerations (we are not aware of such an alternative for proving Theorem~\ref{thm:monotonicity_of_spin_O_n_density}).

When the potential $U$ is a convex function, the distribution of $\mu_{U,G,V_0}$ is log-concave and centrally symmetric (invariant to a global sign flip), whence the marginal distribution of $\varphi_x$ is also log-concave and centrally symmetric so that $|\varphi_x|$ has a non-increasing density. While our assumption that $U$ is monotone does not imply that $U$ is convex, it allows us to decompose the distribution of $\mu_{U,G,V_0}$ as a mixture of centrally-symmetric log-concave distributions and deduce Theorem~\ref{thm:non-increasing_density} as before. The decomposition is a special case of the Edwards-Sokal decomposition \cite{edwards1988generalization} and we briefly describe it next.

Let $t = (t_e)_{e\in E}\in (0,\infty)^E$ and define the measure $\mu_{t,G,V_0}$ by
\begin{equation}\label{hammock_measure}
 d\mu_{t,G,V_0}\left(\varphi\right):=\frac{1}{Z_{t,G,V_0}}\prod\limits _{\{v,w\}\in E}\mathbf{1}_{\left[-t_{\{v,w\}},t_{\{v,w\}}\right]}
 \left(\varphi_{v}-\varphi_{w}\right)\prod_{v\in V_0}\delta_{0}\left(d\varphi_{v}\right)\prod\limits _{v\in V\setminus
 V_0}d\varphi_{v},
\end{equation}
where $Z_{t,G,V_0}$ is a normalizing constant. In other words, the measure $\mu_{t,G,V_0}$ is uniform on the set of all $t$-Lipschitz functions, functions changing by at most $t_{\{v,w\}}$ on the edge $\{v,w\}$, normalized to be $0$ at $V_0$. In particular, $\mu_{t,G,V_0}$ is log-concave and centrally-symmetric. We say that $\mu_{t,G,V_0}$ is the measure of a \emph{random surface with inhomogeneous hammock potentials}.

Recalling our assumption that $U$ is monotone in the sense \eqref{eq:monotonicity_condition_surfaces}, let us suppose, for convenience, that $e^{-U}$ is right continuous on
$[0,\infty)$ (noting that the measure $\mu_{U,G,V_0}$ is invariant to
changing the value of $U$ at countably many points). Define the
measure $\lambda_U$ with the Lebesgue-Stieltjes differential,
$d\lambda_U(t):=-d\exp\left(-U(t)\right)$ on $[0,\infty)$, that is,
\[
\int\limits_{(a,b]}d\lambda
_U(s):=\exp\left(-U(a)\right)-\exp\left(-U(b)\right),~ 0<a<b,~and~
\lambda_{U}\left(\left\{0\right\}\right):=0.
\]
Since $U$ is symmetric, we can write
\[
\exp\left(-U\left(x\right)\right)=\int\limits
_{(|x|,\infty)}d\lambda_{U}(s)=
\int\limits_{[0,\infty)}\mathbf{1}_{\left[-s,s\right]}\left(x\right)d\lambda_{U}(s).
\]
Substituting this expression for $\exp\left(-U\left(x\right)\right)$ in the density of $\mu_{U,G,V_0}$ given in \eqref{eq:random_surface_measure} shows, after a short calculation, that $\mu_{U,G,V_0}$ is a mixture, with respect to $t$, of measures of the form $\mu_{t,G,V_0}$.

As remarked in Section~\ref{sec:spin_systems}, analogs of Theorem~\ref{thm:monotonicity_of_spin_O_n_density} and Theorem~\ref{thm:non-increasing_density} continue to hold for clock models and integer-valued random surfaces, with the same proofs. To prove the analog of Theorem~\ref{thm:non-increasing_density} via the convexity approach, one needs to know that if the potential $U$, now defined on integers, is convex in a suitable sense then the marginal distribution of $\varphi_x$ is log-concave (in a suitable sense). Such a result was proved by Sheffield \cite[Lemma
8.2.4]{sheffield2005random} (using the cluster-swapping method described in Section~\ref{sec:reflection_transformations_cluster_algorithms}).

\section{Estimating the probability of having extremal gradients}\label{extremal_gradients_sec}
In this section we prove Theorem~\ref{main_thm}.

Let $G=\left(V,E\right)$ be a finite connected graph and let
$V_0\subseteq V$ be non-empty.  Let $U$ be a potential satisfying assumptions
\eqref{eq:monotonicity_condition_surfaces} and \eqref{eq:Lipschitz_condition_surfaces}, and recall from
Section~\ref{sec:introduction} the random surface measure $\mu_{U,G,V_0}$. Observe that if any of the given edges $\{v_1,w_1\},\ldots,\{v_k, w_k\}$ has both endpoints in $V_0$ then the conclusion \eqref{eq:main_result} of Theorem~\ref{main_thm} follows trivially as the gradient of $\varphi$ on that edge is zero almost surely. We assume henceforth that each of the given edges has at most one endpoint in $V_0$. Without loss of generality, we now assume that $V_0$ is a singleton, that is,
\[
 V_0=\{v_0\}.
\]
as we can replace our graph with the graph in which all vertices in $V_0$ have been identified to a single vertex $v_0$, erasing self-loops and keeping a single representative of each multiple edge, and note that the random surface measure is naturally preserved under this operation. This identification operation may substantially increase the degree of $v_0$, possibly beyond the maximal degree in the original graph. Thus, we will take care not to rely on the degree of $v_0$ in our proofs. The degrees of all other vertices cannot increase under the identification operation. To address these issues it is convenient to define, for a given set of edges $F\subseteq E$,
\begin{equation*}
  \Delta(F) := \max\{\deg(w)\,:\, \exists \{v,w\}\in F, w\neq v_0\}
\end{equation*}
where we write $\deg(w)$ for the degree of $w$ in the graph $G$, so that $\Delta(F)$ is the maximal degree of a vertex other than $v_0$ in one of the edges of $F$. For brevity we write
\[
\mu_{U,G,v_0}:=\mu_{U,G,\{v_0\}}
\]

 We start with several definitions which will be used throughout the section. For a given
set of edges $H\subseteq E$, let
\begin{equation*}
  \orient(H) := \left\{(v,w):\left\{v,w\right\}\in H\right\}
\end{equation*}
stand for all orientations of the edges of $H$ (each undirected edge appears with both orientations in $\orient(H)$). For brevity, we
denote the set of all oriented edges by
\begin{equation*}
  \dvec{E}:=\orient(E).
\end{equation*}
We will frequently reference the event that the random surface has
extremal gradients on a given set of edges. This event will be used
both for oriented and for unoriented sets of edges and thus we
define, for each $0<\varepsilon<1$,
\begin{equation}\label{extreme_slopes}
\begin{split}
\LL(H,\varepsilon)&:= \left\{\left|\varphi_{v}-\varphi_{w}\right|\ge
1-\varepsilon \text{ for all } \left\{v,w\right\}\in H\right\},\quad
H\subseteq E,\\
\LL(\dvec{H},\varepsilon)&:=
\left\{\left|\varphi_{v}-\varphi_{w}\right|\ge 1-\varepsilon\text{
for all } (v,w)\in \dvec{H}\right\},\quad\dvec{H}\subseteq \dvec{E}.
\end{split}
\end{equation}
Theorem~\ref{main_thm} is proved as a consequence of three lemmas
which we now proceed to describe.

{\bf Diluting the given edge set}. We partition the real line into
$9$ sets, each of which is an arithmetic progression of intervals,
as follows
\begin{equation}\label{eq:D_j_def}
  D_{j}:=\frac{j}{4}+\left[-\frac{1}{8},\frac{1}{8}\right)+2\frac{1}{4}\mathbb{Z} = \left\{\frac{j}{4} + x + 2\frac{1}{4}k\;:\;-\frac{1}{8}\le x< \frac{1}{8},\,
  k\in\mathbb{Z}\right\},\quad
1\leq j\leq9.
\end{equation}
(the shorthand $k\frac{\ell}{m}$ means $k+\frac{\ell}{m}$).
The main property of these domains which we shall make use of is
that, for each $j$, the set $D_j$ is invariant to reflection with
respect to numbers in $\frac{j}{4} + 1\frac{1}{8} +
2\frac{1}{4}\mathbb{Z}$. In other words, for any $1\le j\le 9$,
\begin{equation}\label{eq:D_j_invariance}
  \text{if $y\in D_j$ then also $2m-y\in D_j$ for any $m\in \frac{j}{4} +
  1\frac{1}{8}+2\frac{1}{4}\mathbb{Z}$}.
\end{equation}
The sets $D_j$ are in fact invariant to reflections with respect to
numbers in the larger set $\frac{j}{4} + 1\frac{1}{8}\mathbb{Z}$ but
this will not be used in our proof.

Given a set of oriented edges, we define the event that the value
of the surface on the first vertex of each oriented edge belongs to
$D_j$,
\begin{equation*}
   \Omega_{j}(\dvec{H}):=\left\{\varphi_{v}\in D_{j} \text{ for all }
\left(v,w\right)\in\dvec{H}\right\},\quad 1\leq j\leq 9
\end{equation*}
and also
\begin{equation*}
   \Omega(\dvec{H}):=\bigcup_{j=1}^{9}\Omega_{j}(\dvec{H}).
\end{equation*}
We will make use of certain separation properties between oriented
edges as given in the following definition.
\begin{definition}\label{separated set} (Separated set). A subset of oriented
edges $\dvec{H}\subseteq\dvec{E}$ is said to be \emph{separated} if
\begin{enumerate}
\item[(i)] for every distinct $\left(v_{1},w_{1}\right),\left(v_{2},w_{2}\right)\in\dvec{H}$,
$w_1\neq w_2$, $v_1\neq w_2$ and $v_2\neq w_1$,
\item[(ii)] for all  $\left(v,w\right)\in\dvec{H},\;w\neq v_0$.
\end{enumerate}
\end{definition}
In words, a separated set of oriented edges is a set in which every
two edges are either disjoint or coincide in their first vertex and
in which no edge is oriented towards  $v_0$.

Our first lemma shows that the probability of $\LL(F,\varepsilon)$,
for a given $F\subseteq E$, may be bounded in terms of the
probability of $\LL(\dvec{H},\varepsilon)\cap\Omega(\dvec{H})$, for
some large separated $\dvec{H}\subseteq\orient(F)$.
\begin{lemma}\label{separated_set_lemma}
Let $0<\varepsilon<1$, let $F\subseteq E$ be a non-empty set of
edges and let $\varphi$ be randomly sampled from $\mu_{U,G,v_0}$. Then
there exists a separated set $\dvec{H}\subseteq\orient(F)$ satisfying
$|\dvec{H}|\geq \frac{|F|}{9\Delta(F)}$ and
\begin{equation}\label{properties}
\mathbb{P}\left(\LL(F,\varepsilon)\right)\leq
2^{|F|}\cdot\mathbb{P}\left(\LL(\dvec{H},\varepsilon)\cap\Omega(\dvec{H})\right).
\end{equation}
\end{lemma}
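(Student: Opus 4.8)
\textbf{Proof proposal for Lemma~\ref{separated_set_lemma}.}

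The plan is to pass from the unoriented edge set $F$ to a large separated subset in three successive prunings, paying a multiplicative factor at each stage, and then to remove the conditioning on the position of $\varphi$ modulo the arithmetic structure by averaging over the nine classes $D_j$. First I would orient: replace each edge $\{v,w\}\in F$ by a single oriented edge $(v,w)\in\orient(F)$, choosing the orientation away from $v_0$ if exactly one endpoint is $v_0$ (this is legitimate since by the standing assumption no edge of $F$ has both endpoints in $V_0=\{v_0\}$), and otherwise arbitrarily. This does not change the event $\LL$, and it arranges condition~(ii) of Definition~\ref{separated set} automatically. Next I would extract, by a greedy/graph-colouring argument, a subset $\vec{H}_1\subseteq\orient(F)$ with $|\vec{H}_1|\ge |F|/\Delta(F)$ in which condition~(i) holds: form the auxiliary graph on the oriented edges of $\orient(F)$ in which two oriented edges are joined when they violate~(i), observe that each oriented edge $(v,w)$ conflicts only with oriented edges incident (in $G$) to $w$, of which there are at most $\deg(w)-1\le\Delta(F)-1$ relevant ones — the bound on the number of neighbours in the conflict graph being at most $\Delta(F)-1$ — so this auxiliary graph has maximum degree $<\Delta(F)$ and hence an independent set of size at least $|F|/\Delta(F)$.

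The third pruning handles the arithmetic classes. For the separated set $\vec{H}_1$ we have $\LL(F,\varepsilon)\subseteq\LL(\vec{H}_1,\varepsilon)$, and since $\varphi_v\in\bigcup_{j=1}^9 D_j$ always (the nine sets $D_j$ partition $\RR$), we may write
\begin{equation*}
  \LL(\vec{H}_1,\varepsilon) = \bigcup_{g:\vec{H}_1\to\{1,\dots,9\}}\ \LL(\vec{H}_1,\varepsilon)\cap\{\varphi_v\in D_{g(v,w)}\text{ for all }(v,w)\in\vec{H}_1\}.
\end{equation*}
There are at most $9^{|\vec{H}_1|}\le 9^{|F|}$ colourings $g$, but I only want a bound of the form $2^{|F|}$; the right way is not to union-bound over all $g$ but rather to note that, for each fixed edge, its first vertex lies in exactly one $D_j$, so the events indexed by $g$ are disjoint and partition $\LL(\vec{H}_1,\varepsilon)$. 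Hence there is at least one colouring $g^\star$ whose piece has probability at least $9^{-|F|}\P(\LL(\vec{H}_1,\varepsilon))$ — but this costs a factor $9^{|F|}$, not $2^{|F|}$. To recover the stated $2^{|F|}$ I would instead first split $\vec{H}_1$ according to which $D_j$ the first vertex lies in and keep the majority class: some value $j^\star$ has $\P(\LL(\vec{H}_1,\varepsilon)\cap\Omega_{j^\star}(\vec{H}_1))\ge \tfrac19\P(\LL(\vec H_1,\varepsilon))$ only after further subdividing — the cleaner route is to incorporate the $D_j$-splitting into the greedy step, simultaneously requiring the first vertices of the chosen oriented edges to lie in a common $D_j$; pigeonholing over the nine classes while running the independent-set argument yields $\vec H\subseteq\orient(F)$ with $|\vec H|\ge |F|/(9\Delta(F))$, separated, and contained in a single $\Omega_{j^\star}$. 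Then $\Omega(\vec H)\supseteq\Omega_{j^\star}(\vec H)$ and the factor actually incurred is just the $2^{|F|}$ coming from the choice of orientation on those edges of $F$ with both endpoints off $v_0$ (there the orientation was free, so to dominate $\P(\LL(F,\varepsilon))$ by a statement about one fixed orientation one pays $2$ per such edge, hence at most $2^{|F|}$ overall).

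Assembling: $\P(\LL(F,\varepsilon))\le 2^{|F|}\,\P(\LL(\vec H,\varepsilon)\cap\Omega_{j^\star}(\vec H))\le 2^{|F|}\,\P(\LL(\vec H,\varepsilon)\cap\Omega(\vec H))$, with $\vec H$ separated and $|\vec H|\ge |F|/(9\Delta(F))$, which is exactly~\eqref{properties}. The main obstacle is the bookkeeping in the greedy extraction: one must run the independent-set selection in the conflict graph \emph{and} the pigeonhole over the nine classes $D_j$ together, in the right order, so that the resulting $\vec H$ is simultaneously separated, monochromatic in $j$, and of size $\ge |F|/(9\Delta(F))$; doing the two reductions separately either loses a worse constant or breaks separatedness. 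A clean way is: first pigeonhole $F$ itself has no role, rather one processes oriented edges one at a time, maintaining a partial separated set, and when adding $(v,w)$ one both checks the $O(\Delta(F))$ conflicts and records $j$ with $\varphi_v\in D_j$ — but since $j$ depends on the random $\varphi$ this last point must be handled by fixing $j$ first and only then selecting edges whose first vertex is guaranteed, on the event of interest, to lie in $D_j$; so the honest statement is that for each $j$ one gets a separated $\vec H_j$ with $\LL(F,\varepsilon)\cap\Omega_j$-type control, and then one picks the best $j$. I would write the proof in that order.
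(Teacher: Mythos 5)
Your construction of the random separated set is essentially the paper's: orient away from $v_0$, pigeonhole the first vertices into the nine classes $D_j$ losing a factor $9$, then greedily thin losing a further factor $\Delta(F)$, to get a (random, $\varphi$-dependent) separated $\vec{H}(\varphi)$ with $|\vec{H}(\varphi)|\ge |F|/(9\Delta(F))$ and $\varphi\in\Omega(\vec{H}(\varphi))$. You also correctly identify the central difficulty — that $\vec{H}(\varphi)$ is random while the lemma asks for a single deterministic $\vec{H}$ — and you correctly reject the naive colouring argument as producing $9^{|F|}$ rather than $2^{|F|}$.

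But your resolution of that difficulty is wrong. You attribute the factor $2^{|F|}$ to ``the choice of orientation on those edges of $F$ with both endpoints off $v_0$,'' claiming that to dominate $\P(\LL(F,\varepsilon))$ by a statement about one fixed orientation one pays a factor $2$ per such edge. But the event $\LL(F,\varepsilon)$ only involves $|\varphi_v-\varphi_w|$ and is therefore orientation-independent, and both you and the paper fix the orientation once and for all at the start; no multiplicative cost is incurred by choosing an orientation. Nor does your ``pick the best $j$'' step work as stated: for a fixed $j$, the set of edges whose first vertex lies in $D_j$ still depends on $\varphi$, so the resulting $\vec{H}_j$ is still random, and you cannot ``select edges whose first vertex is guaranteed, on the event of interest, to lie in $D_j$'' without conditioning.

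The step you are missing is a union bound over the possible values of the random set $\vec{H}(\varphi)$. Let $\mathcal{H}$ be the collection of all separated subsets of $\orient(F)$ of size at least $|F|/(9\Delta(F))$; since $|\orient(F)|=|F|$ (one fixed orientation per edge), $|\mathcal{H}|\le 2^{|F|}$. The pigeonhole-plus-greedy argument shows that, almost surely, $\varphi\in\Omega(\vec{H})$ for some $\vec{H}\in\mathcal{H}$; since also $\LL(F,\varepsilon)\subseteq\LL(\vec{H},\varepsilon)$ for every $\vec{H}\in\mathcal{H}$, one gets
\begin{equation*}
\P(\LL(F,\varepsilon)) \le \sum_{\vec{H}\in\mathcal{H}} \P\bigl(\LL(\vec{H},\varepsilon)\cap\Omega(\vec{H})\bigr) \le 2^{|F|}\max_{\vec{H}\in\mathcal{H}} \P\bigl(\LL(\vec{H},\varepsilon)\cap\Omega(\vec{H})\bigr),
\end{equation*}
and the maximizer is the deterministic $\vec{H}$ the lemma asserts to exist. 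This is where $2^{|F|}$ comes from. Your proposal circles this idea without stating it, and the explanation you do offer is incorrect; I would count this as a genuine gap rather than a cosmetic one.

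One secondary remark: your claim that $(v,w)$ conflicts ``only with oriented edges incident to $w$'' overlooks that $(v',w')$ with $w'=v$ also violates condition~(i); a clean fix is to also discard, when keeping $(v,w)$, all edges oriented toward $v$, at the price of a larger constant in the degree bound.
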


{\bf Unlocking edges}. We now describe the key step in our proof.
Suppose that the random surface $\varphi$ has an extremal gradient
on the oriented edge $\vec{e} = (v,w)$, oriented towards $w$, say,
in the sense that $\varphi_w - \varphi_v \ge 1-\varepsilon$. If
$\varphi_u$ is not much higher than $\varphi_w$ for all neighbors
$u$ of $w$, then we may change the value of the surface on $w$,
thereby reducing the gradient on $(v,w)$ without reducing
significantly the density under the random surface measure (see
Lemma~\ref{lem:prob_of_extremal_unlocked_edge} below). Therefore,
the difficulty in showing that extremal gradients are rare lies in
the possibility that the edge with the extremal gradient is being
`locked' into this extreme position by a neighbor of one of its
endpoints. Our next definition quantifies the notion that an edge is
not locked in such a manner.
\begin{definition}\label{def:unlock_event}
An oriented edge  $\vec{e}=(v,w)\in\dvec{E}$ is called
\emph{unlocked} in $\varphi\in\mathbb{R}^V$ if
\begin{equation}
\max\limits_{u:\{u,w\}\in E}|\varphi_u-\varphi_v|\leq 1\frac{1}{4}.
\end{equation}
We define the corresponding event as
\begin{equation*}
\UL_{\vec{e}}:=\left\{\vec{e} \text{  is  unlocked in } \varphi
\right\}.
\end{equation*}
\end{definition}
The following key lemma will allow us to reduce our study of
extremal edges to the case that these edges are unlocked.
\begin{lemma}\label{UL_operator1} Let $0<\varepsilon<1$, let $\dvec{H}\subseteq \dvec{E}$ be a non-empty, separated set of oriented
edges, let $\vec{e} = (v,w)\in\dvec{H}$ and let $\varphi$ be randomly
sampled from $\mu_{U,G,v_0}$. Then
\[
\mathbb{P}\left(\LL(\dvec{H},\varepsilon)\cap\Omega(\dvec{H})\right)\leq2^{\deg(w)-1}\cdot\mathbb{P}\left(\LL(\dvec{H},\varepsilon)\cap\Omega(\dvec{H})\cap\UL_{\vec{e}}\right).
\]
\end{lemma}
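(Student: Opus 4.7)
The plan is to build a measure-preserving random transformation $T$ such that if $\varphi\in\LL(\vec{H},\varepsilon)\cap\Omega(\vec{H})$ then $T(\varphi)\in\LL(\vec{H},\varepsilon)\cap\Omega(\vec{H})\cap\UL_{\vec{e}}$ with conditional probability at least $2^{-(\deg(w)-1)}$; this directly implies the lemma. Since $D_1,\ldots,D_9$ partition $\mathbb{R}$, the events $\Omega_j(\vec{H})$ are pairwise disjoint, so it suffices to establish the inequality on each $\Omega_j(\vec{H})$ separately.

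Fix $j$. On $\Omega_j(\vec{H})$ there is an integer $k_v$ with $\varphi_v\in\frac{j}{4}+\frac{9}{4}k_v+[-\frac{1}{8},\frac{1}{8})$; set $m^\pm:=\frac{j}{4}+\frac{9}{4}k_v\pm\frac{9}{8}\in\frac{j}{4}+1\frac{1}{8}+2\frac{1}{4}\mathbb{Z}$, so that $\tau_{m^\pm}$ preserves $D_j$ by \eqref{eq:D_j_invariance}, while $m^+-\varphi_v,\varphi_v-m^-\in(1,1\frac{1}{4}]$. Enumerate the neighbors of $w$ other than $v$ as $u_1,\ldots,u_{\deg(w)-1}$, and define $T$ as the composition of two cluster-algorithm steps: first under $\tau_{m^+}$, sample $\omega^+$ from \eqref{eq:random_surface_omega_probability} given $\varphi$ and flip the $\omega^+$-cluster of each $u_i$ with $\varphi_{u_i}>\varphi_v+1\frac{1}{4}$ independently with probability $\frac{1}{2}$; then under $\tau_{m^-}$, analogously for each $u_i$ with $\varphi_{u_i}<\varphi_v-1\frac{1}{4}$. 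Each individual flip is measure preserving by Lemma~\ref{lem:flip_preserves_ES_coupling} (together with the remarks following it allowing $\omega$- and external-randomness-dependent choices of the flipped vertex), so $T$ is measure preserving. For a locked upper neighbor $u_i$, the Lipschitz bound $\varphi_{u_i}\le\varphi_v+2$ via $w$ together with $\varphi_{u_i}>m^+$ yields $2m^+-\varphi_{u_i}\in[\varphi_v,\varphi_v+1\frac{1}{4})$, so flipping unlocks $u_i$; a parallel computation shows the unlocked status of already-unlocked neighbors is preserved by these flips. As the number of $\omega^\pm$-clusters containing a locked neighbor is at most $\deg(w)-1$, the probability that all such clusters are flipped (via independent Bernoulli$(\frac{1}{2})$ choices) is at least $2^{-(\deg(w)-1)}$. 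The event $\Omega_j$ is preserved since $m^\pm$ preserve $D_j$, and $\varphi_v$ itself is unchanged because $v$ lies on the opposite side of each $m^\pm$ from every flipped cluster (using \eqref{eq:random_surface_omega_connection_restriction}).

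The main obstacle is verifying that $T$ preserves $\LL(\vec{H},\varepsilon)$, i.e., every edge of $\vec{H}$ remains extremal after the flips. For any $(v'',w'')\in\vec{H}$, the constraints $\varphi_{v''}\in D_j$ together with $|\varphi_{v''}-\varphi_{w''}|\le 1$ force $\varphi_{v''}$ and $\varphi_{w''}$ to lie on the same side of any $m\in\frac{j}{4}+1\frac{1}{8}+2\frac{1}{4}\mathbb{Z}$; a short case analysis according to which interval $\frac{j}{4}+\frac{9}{4}k+[-\frac{1}{8},\frac{1}{8})$ contains $\varphi_{v''}$ yields $|2m^\pm-\varphi_{v''}-\varphi_{w''}|>1$ almost surely. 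By the Lipschitz hypothesis \eqref{eq:Lipschitz_condition_surfaces}, $U(2m^\pm-\varphi_{v''}-\varphi_{w''})=\infty$, and \eqref{eq:random_surface_omega_probability} then gives $\P(\omega^\pm_{\{v'',w''\}}=1\mid\varphi)=1$. Hence $v''$ and $w''$ almost surely lie in the same $\omega^\pm$-cluster, so any flip moves them together and preserves the gradient $|\varphi_{v''}-\varphi_{w''}|$; the particular choices of half-width $\frac{1}{8}$ and spacing $\frac{9}{4}$ in the definition of $D_j$ are exactly what make this computation close.
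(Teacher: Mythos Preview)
Your strategy is close to the paper's, but the measure-preservation step has a real gap. You invoke Lemma~\ref{lem:flip_preserves_ES_coupling} and the remarks following it, but those remarks only allow the flipped vertex to be chosen as a function of $\omega$ (and of independent external coins). In your procedure the decision to flip the cluster of $u_i$ depends on whether $\varphi_{u_i}>\varphi_v+1\frac14$, which is a function of $\varphi$; such $\varphi$-dependent flips are not covered and are generally \emph{not} measure preserving (e.g.\ flipping the cluster of $u$ with probability $\tfrac12$ only when $\varphi_u>m$ strictly decreases $\P(\varphi_u>m)$). The paper avoids this by first conditioning on the set $W(\varphi)$ of locking neighbours and on the reflection height $m(\varphi)$: once $W$ and $m$ are fixed, the reflection $\varphi\mapsto\varphi^{\omega,W}$ depends on $\omega$ alone, Lemma~\ref{lem:flip_preserves_ES_coupling} applies, and the factor $2^{\deg(w)-1}$ comes from summing over the $\le 2^{\deg(w)-1}$ possible values of $W$ rather than from Bernoulli coins.

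There is also a second gap: you do not handle the case where a locked neighbour's $\omega^\pm$-cluster is connected to $v_0$. By \eqref{eq:flipped_configuration} the flip is then a no-op, that neighbour remains locked, and your claimed lower bound of $2^{-(\deg(w)-1)}$ on the success probability fails. The paper's fix is to flip the \emph{complement} of the $\omega$-cluster of $W$ in that case, which still keeps $\varphi_{v_0}=0$ and still lands in $\UL_{\vec e}$ (this is the content of Claim~\ref{claim_3}). Your verification that $\LL(\vec H,\varepsilon)$ and $\Omega_j(\vec H)$ are preserved is essentially correct and matches Claims~\ref{claim_1}--\ref{claim_2}; note also that on $\LL(\vec H,\varepsilon)$ all locked neighbours lie on one side of $\varphi_v$, so only one of $m^+,m^-$ is ever needed --- the paper exploits this by using a single $m(\varphi)$ determined by the sign of $\varphi_w-\varphi_v$.
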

The proof of the lemma uses the reflection
transformation described in
Section~\ref{sec:reflection_transformations_cluster_algorithms}.

{\bf The probability that an unlocked edge is extremal}. As noted
above, if an edge $(v,w)\in\dvec{E}$ has an extremal gradient and is
unlocked in the surface $\varphi$, then we may change $\varphi_w$ to
reduce the gradient on the edge while controlling the change in the
density of the surface under the measure $\mu_{U,G,v_0}$. This idea is
quantified by the next lemma.
\begin{lemma}\label{lem:prob_of_extremal_unlocked_edge}
  Let $0<\varepsilon\le \frac{1}{8}$, let $\vec{e} =
  (v,w)\in\dvec{E}$ with $w\neq v_0$ and let $\varphi$ be randomly
sampled from $\mu_{U,G,v_0}$. Then
  \begin{equation*}
    \mathbb{P}\left(|\varphi_w - \varphi_v|\ge 1-\varepsilon\,\,|\,\,
  (\varphi_u)_{u\in V\setminus\{w\}}\right)\cdot \mathbf{1}_{\UL_{\vec{e}}} \le
  \delta(U,\vec{e},\varepsilon),\quad\text{almost surely},
  \end{equation*}
  where $\mathbf{1}_A$ denotes the indicator random variable of the event $A$ and where we write
  \begin{equation*}
    \delta(U,\vec{e},\varepsilon) := 8\varepsilon\cdot \exp\left(-U(1-\varepsilon)+U(0)+\deg(w)\left(U\left(\frac{3}{4}\right) - U(0)\right)\right).
  \end{equation*}
\end{lemma}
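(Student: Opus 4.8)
The plan is to bound the conditional probability by integrating over the single free coordinate $\varphi_w$, with all other coordinates $(\varphi_u)_{u\in V\setminus\{w\}}$ fixed. Conditionally on these other values, the law of $\varphi_w$ under $\mu_{U,G,v_0}$ is absolutely continuous with respect to Lebesgue measure on $\mathbb{R}$, with (unnormalized) density proportional to $g(x):=\exp\bigl(-\sum_{u:\{u,w\}\in E}U(x-\varphi_u)\bigr)$. The conditional probability in question is then $\int_{A}g(x)\,dx\big/\int_{\mathbb{R}}g(x)\,dx$, where $A:=\{x:|x-\varphi_v|\ge 1-\varepsilon\}$. So it suffices to produce an upper bound on $\int_A g$ and a matching lower bound on $\int_{\mathbb{R}}g$ that together yield the claimed $\delta(U,\vec{e},\varepsilon)$, on the event $\UL_{\vec e}$.

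For the numerator, I would first note that on $\UL_{\vec e}$ and when $x\in A$ with, say, $x\ge \varphi_v+1-\varepsilon$, every neighbor $u$ of $w$ has $|x-\varphi_u|\ge |x-\varphi_v|-|\varphi_v-\varphi_u|\ge (1-\varepsilon)-1\frac14$ — this is not directly useful; rather the right move is to compare $x\in A$ against the shifted point $x':=x-(1-\varepsilon)$ (when $x\ge\varphi_v+1-\varepsilon$) or $x':=x+(1-\varepsilon)$ (in the other case), which moves the extremal coordinate off its extreme value. Using monotonicity \eqref{eq:monotonicity_condition_surfaces} of $U$ together with the unlocked bound $|\varphi_u-\varphi_v|\le 1\frac14$, one checks that the edge $\{v,w\}$ contributes a factor whose logarithm improves by at least $U(1-\varepsilon)-U(0)$ while each of the remaining $\deg(w)-1$ edges worsens by at most $U(\frac34)-U(0)$ — here one uses that after the shift the relevant gradient lies in $[0,\frac34]$ because $|x'-\varphi_u|\le |\varphi_v-\varphi_u|+\varepsilon\le 1\frac14+\frac18<\cdots$, so one needs $\varepsilon\le\frac18$ and a short interval-arithmetic check to land inside $[-\frac34,\frac34]$. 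This shows $g(x)\le \exp\bigl(-U(1-\varepsilon)+U(0)+(\deg(w)-1)(U(\frac34)-U(0))\bigr)\cdot g(x')$, and integrating over $A$ (noting the two shift maps are injective into $\mathbb R$ with disjoint images, or simply bounding $|A\cap[\text{relevant range}]|$) gives $\int_A g\le \exp(\cdots)\int_{\mathbb R} g$ up to the right constant.

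For the denominator, I would lower bound $\int_{\mathbb R}g\ge \int_{\{x:|x-\varphi_v|\le \varepsilon/2,\ \varphi\ \text{admissible}\}}g$; on this interval of length $\varepsilon$, every neighbor $u$ satisfies $|x-\varphi_u|\le 1\frac14+\varepsilon/2$, which is $>1$, so this naive choice violates the Lipschitz constraint and gives $g\equiv 0$ — instead I would center the test interval not at $\varphi_v$ but at a cleverly chosen point. The correct choice: on $\UL_{\vec e}$ the neighbors $u$ of $w$ lie within $1\frac14$ of $\varphi_v$, hence within an interval of length $2\cdot1\frac14$; pick a length-$\varepsilon$ subinterval $I$ of $[\varphi_v-\frac18,\varphi_v+\frac18]$ (or wherever is needed) on which every $|x-\varphi_u|\le \frac34$ — possible since the $\varphi_u$ all lie in $[\varphi_v-1\frac14,\varphi_v+1\frac14]$ one has to locate a common "safe" window, and the $\frac18$ slack is exactly what makes $\varepsilon\le\frac18$ the hypothesis — giving $\int_{\mathbb R}g\ge \int_I g\ge \varepsilon\cdot\exp(-\deg(w)\,U(\frac34))\ge \varepsilon\cdot\exp(-\deg(w)\,U(\frac34)+\cdots)$. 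Combining numerator and denominator and tracking the $U(0)$ terms yields the bound $8\varepsilon\exp\bigl(-U(1-\varepsilon)+U(0)+\deg(w)(U(\frac34)-U(0))\bigr)$, where the factor $8$ absorbs the loss from the interval-length bookkeeping and the $(\deg(w)-1)$-vs-$\deg(w)$ discrepancy.

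The main obstacle I anticipate is the interval arithmetic in choosing the shift point $x'$ and the safe window $I$: one must verify simultaneously that (a) after shifting an extremal configuration the gradient on $\{v,w\}$ drops into $[0,\frac34]$ (or below) while all neighbor gradients stay in $[-1,1]$ so the configuration remains admissible and one may apply monotonicity with $\frac34$ as the crude bound, and (b) a length-$\varepsilon$ window exists on which all neighbor gradients are simultaneously $\le\frac34$ in absolute value. Both hinge on the constants $1\frac14$ in the definition of "unlocked", $\frac18$ in $\varepsilon\le\frac18$, and $\frac34$ in $\delta(U,\vec e,\varepsilon)$ fitting together, and getting these inequalities exactly right — rather than any deep idea — is where the care is needed.
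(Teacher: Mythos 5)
Your overall plan — condition on everything except $\varphi_w$, write the conditional law as a ratio of integrals of $g(x)=\exp(-\sum_{u:\{u,w\}\in E}U(x-\varphi_u))$, bound numerator and denominator separately — is indeed the paper's structure. But the numerator step you propose has a genuine gap, and the denominator window is placed incorrectly.

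The pointwise shift comparison $g(x)\le C\,g(x')$ with $x'=x-(1-\varepsilon)$ does not hold. On the event $\UL_{\vec e}$ a neighbor $u$ of $w$ may satisfy $\varphi_u=\varphi_v+1\frac14$. Take $x=\varphi_v+1-\varepsilon$: then $|x-\varphi_u|=\frac14+\varepsilon\le 1$, so $g(x)>0$, but $x'=\varphi_v$ gives $|x'-\varphi_u|=1\frac14>1$, so by the Lipschitz condition \eqref{eq:Lipschitz_condition_surfaces} $U(x'-\varphi_u)=\infty$ and $g(x')=0$. Thus no finite $C$ works. Even setting that aside, the claimed improvement on the $\{v,w\}$ edge is too strong: since $x'-\varphi_v\in[0,\varepsilon]$, monotonicity only yields $U(x-\varphi_v)-U(x'-\varphi_v)\ge U(1-\varepsilon)-U(\varepsilon)$, not $U(1-\varepsilon)-U(0)$. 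The paper avoids both problems by not shifting at all: it bounds the numerator integral by (interval length $\varepsilon$) $\times$ (sup of $g$ on $[1-\varepsilon,1]$), using only $U\ge U(0)$ pointwise on the $\deg(w)-1$ neighbor terms, giving $\varepsilon\exp(-U(1-\varepsilon)-(\deg(w)-1)U(0))$.

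The denominator window must be one-sided, not a length-$\varepsilon$ subinterval of $[\varphi_v-\frac18,\varphi_v+\frac18]$. With $\varphi_u$ as above, every $x$ near $\varphi_v$ has $|x-\varphi_u|>1$, so that window has $g\equiv 0$. The paper uses the window $t\in[\frac12,\frac34-\varepsilon]$ (i.e.\ $x\in[\varphi_v+\frac12,\varphi_v+\frac34-\varepsilon]$), of length at least $\frac18$ — which is where the factor $8$ comes from, not from length $\varepsilon$. On this window, combining $\UL_{\vec e}$ (so $\varphi_u\le\varphi_v+1\frac14$) with the observation that the numerator is already zero unless $\varphi_u\ge\varphi_v-\varepsilon$ for all neighbors $u$, one gets $|x-\varphi_u|\le\frac34$ for every neighbor, hence $g\ge\exp(-\deg(w)U(\frac34))$ there. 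This is the step that makes $\varepsilon\le\frac18$ and the constants $1\frac14$, $\frac34$ fit together, which you correctly anticipated was where the care is needed, but the arithmetic as sketched does not close.
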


The above three lemmas are proved in the next section. We now
explain how Theorem~\ref{main_thm} follows as a consequence of them.

{\bf Proof of Theorem~\ref{main_thm}}. Let
$0<\varepsilon\le\frac{1}{8}$, let $F\subseteq E$ be a non-empty set
of edges and let $\varphi$ be randomly sampled from $\mu_{U,G,v_0}$. Our goal is to estimate the probability of $\LL(F,\varepsilon)$.

Using Lemma~\ref{separated_set_lemma}, let
$\dvec{H}\subseteq\orient(F)$ be a separated set satisfying
$|\dvec{H}|\geq \frac{|F|}{9\Delta(F)}$ and
\begin{equation}\label{eq:H_from_F}
  \mathbb{P}\left(\LL(F,\varepsilon)\right)\leq
2^{|F|}\mathbb{P}\left(\LL(\dvec{H},\varepsilon)\cap\Omega(\dvec{H})\right).
\end{equation}

Let $\vec{e}=(v,w)\in\dvec{H}$. By Lemma~\ref{UL_operator1}, we have
\begin{equation}\label{eq:extremal_and_separated_estimate1}
\begin{split}
\mathbb{P}(\LL(\dvec{H},\varepsilon)\cap\Omega(\dvec{H}))&\leq2^{\deg(w)-1}\mathbb{P}\left(\LL(\dvec{H},\varepsilon)\cap\Omega(\dvec{H})\cap\UL_{\vec{e}}\right)\\
&\le 2^{\deg(w)-1}\mathbb{P}\left(\{|\varphi_v-\varphi_w|\ge
1-\varepsilon\}\cap\LL(\dvec{H}\setminus\{\vec{e}\},\varepsilon)\cap\Omega(\dvec{H}\setminus\{\vec{e}\})\cap\UL_{\vec{e}}\right).
\end{split}
\end{equation}
We note that, as $\dvec{H}$ is a separated set, the events
$\LL(\dvec{H}\setminus\{\vec{e}\},\varepsilon)$,
$\Omega(\dvec{H}\setminus\{\vec{e}\})$ and $\UL_{\vec{e}}$ are
measurable with respect to the random variables $(\varphi_u)_{u\in
V\setminus\{w\}}$. Thus, using
Lemma~\ref{lem:prob_of_extremal_unlocked_edge}, we may estimate
\begin{equation}\label{eq:extremal_and_separated_estimate2}
\begin{split}
  &\mathbb{P}\left(\{|\varphi_v-\varphi_w|\ge
1-\varepsilon\}\cap\LL(\dvec{H}\setminus\{\vec{e}\},\varepsilon)\cap\Omega(\dvec{H}\setminus\{\vec{e}\})\cap\UL_{\vec{e}}\right)\\
  &=\mathbb{E}\left[\mathbb{P}\left(|\varphi_w - \varphi_v|\ge 1-\varepsilon\,\,|\,\,
  (\varphi_u)_{u\in
  V\setminus\{w\}}\right)\mathbf{1}_{\LL(\dvec{H}\setminus\{\vec{e}\},\varepsilon)\cap\Omega(\dvec{H}\setminus\{\vec{e}\})\cap\UL_{\vec{e}}}\right]\\
  &\le
  \delta(U,\vec{e},\varepsilon)\mathbb{P}\left(\LL(\dvec{H}\setminus\{\vec{e}\},\varepsilon)\cap\Omega(\dvec{H}\setminus\{\vec{e}\})\right).
\end{split}
\end{equation}
Recall from the statement of Theorem~\ref{main_thm} that
\begin{equation*}
  \delta(U, \varepsilon)=\varepsilon\cdot \exp\left(-U(1-\varepsilon)+U(0)+\Delta(F)\left(U\left(\frac{3}{4}\right) - U(0)\right)\right)
\end{equation*}
and observe that, as $w\neq v_0$ since $\dvec{H}$ is separated and as
$U(3/4)\ge U(0)$ by our assumption that $U$ is non-decreasing on
$[0,\infty)$,
\begin{equation}\label{eq:delta_U_epsilon_comparison}
  \deg(w)\le \Delta(F)\quad\text{and}\quad\delta(U,\vec{e},\varepsilon)\le 8\delta(U, \varepsilon).
\end{equation}
Putting together \eqref{eq:extremal_and_separated_estimate1},
\eqref{eq:extremal_and_separated_estimate2} and
\eqref{eq:delta_U_epsilon_comparison}, we conclude that
\begin{equation*}
  \mathbb{P}(\LL(\dvec{H},\varepsilon)\cap\Omega(\dvec{H}))\le 2^{\Delta(F)-1}\cdot8\delta(U,
  \varepsilon)\mathbb{P}\left(\LL(\dvec{H}\setminus\{\vec{e}\},\varepsilon)\cap\Omega(\dvec{H}\setminus\{\vec{e}\})\right).
\end{equation*}
Iterating this estimate over all edges in $\dvec{H}$ shows that
\begin{equation*}
  \mathbb{P}(\LL(\dvec{H},\varepsilon)\cap\Omega(\dvec{H}))\le
  \left(2^{\Delta(F)+2}\delta(U,\varepsilon)\right)^{|\dvec{H}|}.
\end{equation*}
Substituting this estimate into \eqref{eq:H_from_F} and using the
fact that $|\dvec{H}|\geq \frac{|F|}{9\Delta(F)}$, we have
\begin{equation*}
  \mathbb{P}\left(\LL(F,\varepsilon)\right)\leq
\min\left(2^{|F|}\left(2^{\Delta(F)+2}\delta(U,\varepsilon)\right)^{|\dvec{H}|},1\right)\le
\min\left(\left(2^{10\Delta(F)+2}\delta(U,\varepsilon)\right)^{\frac{|F|}{9\Delta(F)}},
1\right).
\end{equation*}
This concludes the proof of Theorem~\ref{main_thm}, given the above
lemmas, with the constant $C(\Delta) = 2^{10\Delta +2}$.

\subsection{Proof of Lemma~\ref{separated_set_lemma}}

\begin{proof}
Assign an orientation to the edges of $F$, chosen arbitrarily except for the rule that edges having $v_0$ as an endpoint are oriented to have $v_0$ as their first vertex. Denote the resulting set of oriented edges by $\dvec{F}$.

Let $\varphi$ be randomly sampled from $\mu_{U,G,v_0}$. Observe that there
is a $1\le j\le 9$ and a (random) subset $\dvec{F}'\subseteq\dvec{F}$
satisfying $|\dvec{F}'|\ge \frac{1}{9}|\dvec{F}|$ such that
$\varphi_{v}\in D_{j}$ for all $(v,w)\in\dvec{F}'$ (as each edge
satisfies $\varphi_{v}\in D_{j}$ for a unique $j$). In addition, we
may choose a (random) separated subset $\dvec{H}\subseteq\dvec{F}'$
satisfying $|\dvec{H}|\ge \frac{1}{\Delta(F)}|\dvec{F}'|$. Indeed, this
may be done in a greedy manner: Sequentially, for each edge $(v,w)$
still in $\dvec{F}'$, we discard from $\dvec{F}'$ all edges $(x,y)$
with either $x=w$ or $y=w$, discarding in this way at most
$\deg(w)-1\le \Delta(F)-1$ edges. In
conclusion, defining
\begin{equation}
\mathcal{H}:=\left\{\dvec{H}\subseteq\dvec{F}:|\dvec{H}|\geq
\frac{|F|}{9\Delta(F)},\ \dvec{H}\text{ is separated}\right\}
\end{equation}
we have shown that, almost surely, $\varphi\in\Omega(\dvec{H})$ for some $\dvec{H}\in\mathcal{H}$. Thus, using that $|\mathcal{H}|\le 2^{|\dvec{F}|}=2^{|F|}$,
\[
\mathbb{P}(\LL(F,\varepsilon))\leq\sum\limits_{\dvec{H}\in\mathcal{H}}\mathbb{P}(\LL(\dvec{H},\varepsilon)\cap \Omega(\dvec{H}))\leq
2^{|F|}\max\limits_{\dvec{H}\in\mathcal{H}}\mathbb{P}\left(\LL(\dvec{H},\varepsilon)\cap\Omega(\dvec{H})\right).\qedhere
\]
\end{proof}

\subsection{Proof of Lemma~\ref{UL_operator1}}
Fix $0<\varepsilon<1$, a non-empty, separated set of oriented edges
$\dvec{H}\subseteq \dvec{E}$ and an oriented edge $\vec{e} =
(v,w)\in\dvec{H}$.
Let $\varphi$ be randomly sampled from $\mu_{U,G,v_0}$. As the events $\Omega_j(\dvec{H})$ which comprise $\Omega(\dvec{H})$ are disjoint, it suffices to prove that
\begin{equation}\label{eq:UL_op_lemma_inequality_with_j}
  \mathbb{P}\left(\LL(\dvec{H},\varepsilon)\cap\Omega_j(\dvec{H})\right)\leq2^{\deg(w)-1}\cdot\mathbb{P}\left(\LL(\dvec{H},\varepsilon)\cap\Omega_j(\dvec{H})\cap\UL_{\vec{e}}\right),\quad 1\le j\le 9.
\end{equation}
Thus we also fix $1\le j\le 9$. Define the subset $M_j$ of the real line by
\begin{equation*}
  M_j := \frac{j}{4} + 1\frac{1}{8}+2\frac{1}{4}\mathbb{Z}.
\end{equation*}
Now define
\begin{equation*}
  m(\varphi) := \begin{cases}
    \min(m\in M_j : m\ge\varphi_v) & \varphi_w \ge \varphi_v\\
    \max(m\in M_j : m\le\varphi_v) & \varphi_w < \varphi_v
  \end{cases},
\end{equation*}
so that $m(\varphi)$ is on the same side of $\varphi_v$ as $\varphi_w$ and $m(\varphi)\neq \varphi_v$ on the event $\Omega_j(\dvec{H})$. Define also
\begin{equation}\label{def:locked_set}
W(\varphi):=\left\{u\in V\, :\,\{u,w\}\in E,\,\sign(\varphi_{u}-m(\varphi))=\sign(m(\varphi) - \varphi_v)\right\},
\end{equation}
where, as usual, $\sign(x)=1$ if $x>0$, $\sign(x)=-1$ if $x<0$ and $\sign(x)=0$ if $x=0$.
We shall prove that for each $W\subseteq V$ and $m\in M_j$,
\begin{equation}\label{eq:W_varphi_inequality}
  \mathbb{P}(\LL(\dvec{H},\varepsilon)\cap\Omega_j(\dvec{H})\cap\{W(\varphi) = W\}\cap\{m(\varphi) = m\}) \le \mathbb{P}(\LL(\dvec{H},\varepsilon)\cap\Omega_j(\dvec{H})\cap
\UL_{\vec{e}}\cap\{m(\varphi) = m\}).
\end{equation}
This relation implies \eqref{eq:UL_op_lemma_inequality_with_j}, and hence the lemma, by summing over all
possible values of $W$ and $m$, and using the fact that
$\mathbb{P}(W(\varphi) = W) = 0$ unless $W$ is a subset of $\{u\in V\,:\,
\{u,w\}\in E,\, u\neq v\}$.

We proceed to prove \eqref{eq:W_varphi_inequality} and it is here that we make use of the reflection transformation described in Section~\ref{sec:reflection_transformations_cluster_algorithms}. Fix $W\subseteq V$ and $m\in M_j$ satisfying
\begin{equation*}
  \mathbb{P}(W(\varphi) = W, m(\varphi) = m)>0
\end{equation*}
(as the relation \eqref{eq:W_varphi_inequality} is trivial if this probability is zero). Recall the `reflection around $m$' mapping $\tau_m:\mathbb{R}\to\mathbb{R}$ given by $\tau_m(a) = 2m-a$, as in \eqref{eq:random_surfaces_reflection}. Let $(\varphi, \omega)$ be randomly sampled from the $\tau_{m}$-Edwards-Sokal coupling defined in \eqref{eq:random_surface_omega_probability}. We define the reflected configuration $\varphi^{\omega,W}:V\to\mathbb{R}$ as follows: If $W = \emptyset$ then we set $\varphi^{\omega,W}:=\varphi$. Otherwise,
\begin{equation}
  \text{if $W \centernot{\xleftrightarrow{\omega}} v_0$ then $\varphi^{\omega,W}_v := \begin{cases}
    \tau_m(\varphi_v)& W
\xleftrightarrow{\omega} v\\
    \varphi_v& W
\centernot{\xleftrightarrow{\omega}} v
  \end{cases}$.\quad  If $W \xleftrightarrow{\omega} v_0$ then $\varphi^{\omega,W}_v:=\begin{cases}
 \varphi_v& W
\xleftrightarrow{\omega} v\\
   \tau_m(\varphi_v)& W
\centernot{\xleftrightarrow{\omega}} v
  \end{cases}$}.
\end{equation}
It then follows from the discussion after Lemma \ref{lem:flip_preserves_ES_coupling} that $(\varphi^{\omega,W},\omega)$ has the same distribution as $(\varphi,\omega)$. The equality in distribution shows that \eqref{eq:W_varphi_inequality} is a consequence of the following relation:
\begin{multline}\label{eq:reflection_containment}
  \{\varphi\in \LL(\dvec{H},\varepsilon)\cap\Omega_j(\dvec{H})\}\cap\{W(\varphi) =
  W\}\cap\{m(\varphi) = m\}\subseteq\\ \{\varphi^{\omega,W} \in\LL(\dvec{H},\varepsilon)\cap\Omega_j(\dvec{H})\cap\UL_{\vec{e}}\}\cap\{m(\varphi^{\omega,W}) = m\}\quad\pmod\P
\end{multline}
(as before, we write $\!\!\!\pmod\P$ to indicate that the containment is in the sense of the difference having zero probability), where, with a slight abuse of notation, we consider the events $\LL(\dvec{H},\varepsilon)$, $\Omega_j(\dvec{H})$ and $\UL_{\vec{e}}$ as subsets of the space $\mathbb{R}^V$ of configurations. Thus, it remains to prove \eqref{eq:reflection_containment}, which is a consequence of the following three claims.
\begin{claim}\label{claim_1}  Almost surely, if $\varphi\in \LL(\dvec{H},\varepsilon)\cap\Omega_j(\dvec{H})$ then $\varphi^{\omega,W}\in \LL(\dvec{H},\varepsilon)$.
\end{claim}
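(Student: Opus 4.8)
The plan is to isolate one small piece of Euclidean geometry — the spacing built into the sets $D_j$ and $M_j$, together with the Lipschitz constraint — and then to verify, edge by edge, that applying $\tau_m$ to a subset of the vertices cannot decrease the magnitude of an extremal gradient on any edge of $\vec H$.

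First I would record the geometric input. The set $M_j=\frac j4+1\frac18+2\frac14\mathbb Z$ consists exactly of the midpoints of the gaps between consecutive intervals of $D_j$, and each such gap has length $2$; hence every point of $D_j$ lies at distance at least $1$ from every point of $M_j$. Since $\varphi\in\Omega_j(\vec H)$ and $m\in M_j$, this gives $|\varphi_x-m|\ge 1$ for every oriented edge $(x,y)\in\vec H$. Combining this with the fact that, almost surely, $\varphi$ is Lipschitz by \eqref{eq:Lipschitz_condition_surfaces}, so that $|\varphi_x-\varphi_y|\le 1$, I obtain that $\varphi_x$ and $\varphi_y$ are never on strictly opposite sides of $m$: if $\varphi_x\ge m+1$ then $\varphi_y\ge\varphi_x-1\ge m$, and if $\varphi_x\le m-1$ then $\varphi_y\le m$. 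In either case $|\varphi_x+\varphi_y-2m|\ge 1$.

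Next I would use the structure of $\varphi^{\omega,W}$. Directly from its definition, $\varphi^{\omega,W}$ is obtained from $\varphi$ by applying $\tau_m$ to the vertices of some set $V'\subseteq V$, so that $\varphi^{\omega,W}_u\in\{\varphi_u,\tau_m(\varphi_u)\}=\{\varphi_u,\,2m-\varphi_u\}$ for every $u$. Fixing $(x,y)\in\vec H$, I would split into cases according to whether each of $\varphi^{\omega,W}_x,\varphi^{\omega,W}_y$ equals the original value or its reflection. If $(\varphi^{\omega,W}_x,\varphi^{\omega,W}_y)$ equals $(\varphi_x,\varphi_y)$ or $(2m-\varphi_x,2m-\varphi_y)$, then $\varphi^{\omega,W}_x-\varphi^{\omega,W}_y$ equals $\varphi_x-\varphi_y$ up to sign, so its magnitude is unchanged and is at least $1-\varepsilon$ because $\varphi\in\LL(\vec H,\varepsilon)$. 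If instead it equals $(\varphi_x,2m-\varphi_y)$ or $(2m-\varphi_x,\varphi_y)$, then $|\varphi^{\omega,W}_x-\varphi^{\omega,W}_y|=|\varphi_x+\varphi_y-2m|\ge 1\ge 1-\varepsilon$ by the previous paragraph. Hence $|\varphi^{\omega,W}_x-\varphi^{\omega,W}_y|\ge 1-\varepsilon$ in all cases, which is precisely the assertion $\varphi^{\omega,W}\in\LL(\vec H,\varepsilon)$; as the only probabilistic input (the Lipschitz property) holds almost surely, the claim follows.

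I do not anticipate a genuine obstacle here; this is the easiest of the three claims. The one point needing care is that the \emph{mixed} case must be closed using $|\varphi_x-m|\ge 1$ — that is, the spacing of $D_j$ and $M_j$ — rather than merely the Lipschitz bound, since otherwise the reflected gradient $|\varphi_x+\varphi_y-2m|$ need not be large. I would also remark in passing that the argument uses neither that $V'$ is a union of $\omega$-connected components nor that $\vec H$ is separated; those properties enter only in the remaining claims (that $\varphi^{\omega,W}$ stays in $\Omega_j(\vec H)$ and $\UL_{\vec e}$ with $m(\varphi^{\omega,W})=m$) and in the distributional identity supplied by Lemma~\ref{lem:flip_preserves_ES_coupling}.
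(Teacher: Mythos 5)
Your proof is correct and follows essentially the same route as the paper: the heart of both arguments is the observation that $\dist(D_j,M_j)=1$ together with the Lipschitz bound forces $|\varphi_x+\varphi_y-2m|\ge 1$ in the mixed case. The paper pushes one step further and proves the stronger statement that $|\varphi^{\omega,W}_x-\varphi^{\omega,W}_y|=|\varphi_x-\varphi_y|$ exactly, which requires also invoking the almost sure Lipschitz property of the reflected configuration $\varphi^{\omega,W}$; you correctly note that the claim only asserts $\ge 1-\varepsilon$, so the bound $|\varphi_x+\varphi_y-2m|\ge 1\ge 1-\varepsilon$ already closes the argument without needing anything about $\varphi^{\omega,W}$ beyond its pointwise description.
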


\begin{proof}
We will prove the stronger consequence that, under the given assumptions,
\begin{equation}\label{eq:slope_is_preserved}
  |\varphi_x - \varphi_y| = |\varphi^{\omega,W}_x -
  \varphi^{\omega,W}_y|\quad\text{for all
  $(x,y)\in\dvec{H}$}.
\end{equation}
Fix $(x,y)\in\dvec{H}$. Observe that, by
definition of the reflection operation,
\begin{equation*}
  |\varphi^{\omega,W}_x -
  \varphi^{\omega,W}_y|\in \{|\varphi_x - \varphi_y|, |2m
  - \varphi_x - \varphi_y|\}.
\end{equation*}
Suppose that
\begin{equation}\label{eq:second_slope_case}
  |\varphi^{\omega,W}_x -
  \varphi^{\omega,W}_y|=
|2m - \varphi_x - \varphi_y|
\end{equation}
as in the other case \eqref{eq:slope_is_preserved} is clearly
satisfied. Since $\varphi\in\Omega_j(\dvec{H})$ it follows that
$\varphi_x\in D_{j}$. Thus, recalling the definition
\eqref{eq:D_j_def} of $D_j$ and the fact that $m\in M_j$ we see that
\begin{equation}\label{eq:d_m_varphi_x}
  |m - \varphi_x| \ge \dist(M_j, D_j) = 1.
\end{equation}
Since $|\varphi_x-\varphi_y|\le1$ it implies that
either $m\ge \max(\varphi_x, \varphi_y)$ or $m\le \min(\varphi_x,
\varphi_y)$. In both cases,
\begin{equation}\label{eq:triangle_equality}
  |2m - \varphi_x - \varphi_y| = |m - \varphi_x| + |m - \varphi_y| \ge 1.
\end{equation}
Since $\varphi^{\omega,W}$ is a Lipschitz function almost surely, we conclude from
\eqref{eq:second_slope_case} that equality must hold in
\eqref{eq:triangle_equality}. Taking into account
\eqref{eq:d_m_varphi_x}, this implies that $\varphi_y = m$ and $|m -
\varphi_x| = 1$, in which case $|2m - \varphi_x - \varphi_y| =
|\varphi_x - \varphi_y|$ so that \eqref{eq:slope_is_preserved}
holds.
\end{proof}
\begin{claim}\label{claim_2}
  If $\varphi\in \Omega_j(\dvec{H})$ then $\varphi^{\omega,W}\in \Omega_j(\dvec{H})$.
\end{claim}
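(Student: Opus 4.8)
The plan is to reduce Claim~\ref{claim_2} to the reflection invariance of the sets $D_j$ recorded in \eqref{eq:D_j_invariance}. The only input needed is the elementary observation that, whatever the $\omega$-connectivity structure and the set $W$ are, the reflected configuration never touches a coordinate except by possibly applying $\tau_m$ to it: directly from the definition of $\varphi^{\omega,W}$ one has $\varphi^{\omega,W}_x\in\{\varphi_x,\tau_m(\varphi_x)\}=\{\varphi_x,\,2m-\varphi_x\}$ for every vertex $x\in V$ (and if $W=\emptyset$ then $\varphi^{\omega,W}=\varphi$, so there is nothing to prove in that case).

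With this in hand, fix an oriented edge $(v,w)\in\vec{H}$. By hypothesis $\varphi\in\Omega_j(\vec{H})$, so $\varphi_v\in D_j$. The coupling $(\varphi,\omega)$ was drawn from the $\tau_m$-Edwards--Sokal coupling for the value $m\in M_j=\frac{j}{4}+1\frac{1}{8}+2\frac{1}{4}\mathbb{Z}$ fixed earlier in the proof, so the reflection is about a point of $M_j$; hence \eqref{eq:D_j_invariance} applies and yields $2m-\varphi_v\in D_j$ as well. Therefore $\varphi^{\omega,W}_v\in D_j$ in either of the two possible cases. Since this holds simultaneously for every $(v,w)\in\vec{H}$, we obtain $\varphi^{\omega,W}\in\Omega_j(\vec{H})$, which is exactly the claim.

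I do not expect any genuine obstacle here. In contrast with Claim~\ref{claim_1}, the statement holds for \emph{every} configuration, not merely almost surely, because it constrains only the values $\varphi_v$ at the first vertices of the edges of $\vec{H}$, and these are altered by the reflection solely through $\tau_m$, under which $D_j$ is closed by design. The only points requiring a moment's care are bookkeeping: that the relevant membership $m\in M_j$ is precisely what makes \eqref{eq:D_j_invariance} applicable, and that the degenerate case $W=\emptyset$ is handled (trivially). Everything else is immediate from the definitions.
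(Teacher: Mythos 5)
Your proof is correct and follows the same route as the paper: both rest on the single observation that $\varphi^{\omega,W}_x\in\{\varphi_x,2m-\varphi_x\}$ and that $D_j$ is closed under $\tau_m$ for $m\in M_j$ by \eqref{eq:D_j_invariance}. Your version spells out the bookkeeping (including the $W=\emptyset$ case and the remark that the claim holds deterministically, not merely almost surely), but the substance is identical to the paper's one-line argument.
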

\begin{proof} The claim follows from the fact that $\tau_m(D_j) \subseteq D_j$ since $m\in M_j$, as noted in \eqref{eq:D_j_invariance}.
\end{proof}
 \begin{claim}\label{claim_3}
  Almost surely, if $\varphi\in \LL(\dvec{H},\varepsilon)\cap\Omega_j(\dvec{H})$, $W(\varphi) =
  W$ and $m(\varphi) = m$ then $m(\varphi^{\omega,W}) = m$ and $\varphi^{\omega,W}\in \UL_{\vec{e}}$.
 \end{claim}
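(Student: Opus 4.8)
The plan is to reduce by symmetry to the case $\varphi_w\ge\varphi_v$, the case $\varphi_w<\varphi_v$ being identical after interchanging the roles of ``$>m$'' and ``$<m$'' throughout. On the event in question we have $\varphi\in\LL(\vec H,\varepsilon)\cap\Omega_j(\vec H)$ with $W(\varphi)=W$ and $m(\varphi)=m$, and the first step is to record the geometry these conditions force. Since $\varphi_v\in D_j$ and $m\in M_j$, the smallest element of $M_j$ exceeding a point of $D_j$ lies at distance strictly more than $1$ and at most $1\frac{1}{4}$, so $m-\varphi_v\in(1,1\frac{1}{4}]$. Together with $|\varphi_v-\varphi_w|\ge 1-\varepsilon$ and the almost sure Lipschitz property of $\varphi$, this yields $\varphi_v<\varphi_w\le\varphi_v+1<m$, so the edge $\vec e$ lies strictly below $m$. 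This inequality $m-\varphi_v\in(1,1\frac{1}{4}]$ is what drives all of the estimates below.

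Next I would identify the vertices that change in passing to $\varphi^{\omega,W}$. Because $U$ is monotone, property~\eqref{eq:random_surface_omega_connection_restriction} holds almost surely, so every $\omega$-connected component lies entirely on one side of $m$ (or is a singleton at height $m$). Since $W=W(\varphi)$ consists of the neighbours $u$ of $w$ with $\varphi_u>m$, the set $\mathcal C:=\{x\in V: W\xleftrightarrow{\omega}x\}$ of vertices in $\omega$-components meeting $W$ is contained in $\{x:\varphi_x>m\}$; in particular $v,w\notin\mathcal C$, and $v_0\in\mathcal C$ is possible only when $0=\varphi_{v_0}>m$. The definition of $\varphi^{\omega,W}$ now gives two cases. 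If $v_0\notin\mathcal C$ then $\tau_m$ is applied exactly to the vertices of $\mathcal C$, so $\varphi^{\omega,W}_v=\varphi_v$, $\varphi^{\omega,W}_w=\varphi_w$, and hence $m(\varphi^{\omega,W})=m$ immediately. If $v_0\in\mathcal C$ (so $m<0$) then $\tau_m$ is applied exactly to $V\setminus\mathcal C$, so $\varphi^{\omega,W}_v=2m-\varphi_v$ and $\varphi^{\omega,W}_w=2m-\varphi_w$; now the order of the two values has reversed, but $\tau_m(D_j)\subseteq D_j$ (as $m\in M_j$, see \eqref{eq:D_j_invariance}) together with $m-\varphi_v\in(1,1\frac{1}{4}]$ shows that the largest element of $M_j$ not exceeding $2m-\varphi_v$ is again $m$, so once more $m(\varphi^{\omega,W})=m$.

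It remains to check $\varphi^{\omega,W}\in\UL_{\vec e}$, that is, $|\varphi^{\omega,W}_u-\varphi^{\omega,W}_v|\le 1\frac{1}{4}$ for every neighbour $u$ of $w$. Since $v\notin\mathcal C$ in both cases, the vertex $u$ is moved by $\tau_m$ relative to $v$ precisely when $u\in\mathcal C$, and for a neighbour of $w$ this happens precisely when $\varphi_u>m$ (such a $u$ lies in $W\subseteq\mathcal C$, while $\mathcal C\subseteq\{\varphi>m\}$). As $\tau_m$ is an isometry, $|\varphi^{\omega,W}_u-\varphi^{\omega,W}_v|$ equals $|\varphi_u-\varphi_v|$ when $\varphi_u\le m$ and $|\tau_m(\varphi_u)-\varphi_v|$ when $\varphi_u>m$. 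In the former case $\varphi_v-\varepsilon\le\varphi_w-1\le\varphi_u\le m\le\varphi_v+1\frac{1}{4}$, so $|\varphi_u-\varphi_v|\le 1\frac{1}{4}$; in the latter case $\varphi_u\le\varphi_w+1\le\varphi_v+2<2m-\varphi_v$ (using $m-\varphi_v>1$), hence $\varphi_v<\tau_m(\varphi_u)=2m-\varphi_u<m\le\varphi_v+1\frac{1}{4}$, again within $1\frac{1}{4}$ of $\varphi_v$. This completes the claim. The step I expect to be the main obstacle is the case $v_0\in\mathcal C$, where the extremal edge itself is reflected and one must recheck the definition of $m(\cdot)$ after the order reversal while keeping careful track of which neighbours of $w$ are reflected; once this bookkeeping is done, everything reduces to the one- and two-step Lipschitz estimates above, all consequences of $m-\varphi_v\in(1,1\frac{1}{4}]$.
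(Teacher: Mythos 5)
Your proof is correct and follows essentially the same route as the paper's: reduce to $\varphi_w>\varphi_v$, record $m-\varphi_v\in(1,1\tfrac{1}{4}]$, use property~\eqref{eq:random_surface_omega_connection_restriction} (together with the Lipschitz property) to locate the $\omega$-cluster $\mathcal C$ of $W$ strictly above $m$, split on whether $v_0\in\mathcal C$ to determine $\varphi^{\omega,W}$ on $v,w$ and the neighbors of $w$, and then verify the two elementary bounds. The only cosmetic difference is that you carry out the two $v_0$-cases for the value of $m(\varphi^{\omega,W})$ separately, whereas the paper simply checks all four a priori possibilities for $(\varphi^{\omega,W}_v,\varphi^{\omega,W}_w)$ at once.
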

 \begin{proof}
Let $k\in\mathbb{Z}$ be such that $\varphi_v\in \frac{j}{4}+\left[-\frac{1}{8},\frac{1}{8}\right) + 2\frac{1}{4}k$, using that $\varphi\in \Omega_j(\dvec{H})$. As $\varphi\in \LL(\dvec{H},\varepsilon)$ we have that $\varphi_w\neq \varphi_v$. For concreteness, assume that $\varphi_w>\varphi_v$ with the other case being treated similarly. Thus $m = \frac{j}{4}+ 1\frac{1}{8} + 2\frac{1}{4}k$ and note that $\varphi_w\le m$ as $\varphi$ is a Lipschitz function.

The definition of $\varphi^{\omega,W}$ implies that $\varphi^{\omega,W}_v\in \{\varphi_v, 2m-\varphi_v\}$ and $\varphi^{\omega,W}_w\in \{\varphi_w, 2m-\varphi_w\}$. The fact that both $\varphi_v\le m$ and $\varphi_w\le m$ imply that in all four possibilities for the values of $\varphi^{\omega,W}_v$ and $\varphi^{\omega,W}_w$ we have $m(\varphi^{\omega,W}) = m$.

Fix $u$ with $\{u,w\}\in E$. If $u\in W$, that is $\varphi_u>m$, the definition of $\varphi^{\omega,W}$ and property \eqref{eq:random_surface_omega_connection_restriction} imply that, almost surely,
\begin{equation*}
\begin{split}
&\text{if $W \centernot{\xleftrightarrow{\omega}} v_0$ then $\varphi^{\omega,W}_v=\varphi_v,\; \varphi^{\omega,W}_u=2m-\varphi_u$,\quad and}\\
&\text{if $W \xleftrightarrow{\omega} v_0$ then $\varphi^{\omega,W}_v=2m-\varphi_v,\; \varphi^{\omega,W}_u=\varphi_u$}.
\end{split}
\end{equation*}
In both cases
  \[
  |\varphi_{v}^{\omega,W}- \varphi_{u}^{\omega,W}|=|\varphi_v +\varphi_u -2m|\le\max(m - \varphi_v, \varphi_u - m)\le 1\frac{1}{4}
  \]
where we used that $\varphi_u-m \le \varphi_u - \varphi_w\le 1$ as $\varphi$ is a Lipschitz function.

If $u\notin W$, that is $\varphi_u\le m$, then $\varphi_u-\varphi_v\in (-1, 1\frac{1}{4}]$ as $\varphi$ is a Lipschitz function. As $\varphi_v, \varphi_u\le m$ we conclude from the definition of $\varphi^{\omega,W}$, the fact that $\varphi_x>m$ for all $x\in W$ and property \eqref{eq:random_surface_omega_connection_restriction} that, almost surely, $|\varphi_{u}^{\omega,W}- \varphi_{v}^{\omega,W}| = |\varphi_{u}- \varphi_{v}|$, whence $|\varphi_{u}^{\omega,W}- \varphi_{v}^{\omega,W}|\le 1\frac{1}{4}$.

Thus, $\max\limits_{u:\{u,w\}\in E} |\varphi^{\omega,W}_u-\varphi^{\omega,W}_v|\le1\frac{1}{4}$, from which it follows that $\varphi^{\omega,W}\in \UL_{\vec{e}}$.\qedhere
\end{proof}

\subsection{Proof of Lemma~\ref{lem:prob_of_extremal_unlocked_edge}}

Fix $0<\varepsilon\le\frac{1}{8}$ and $\vec{e}=(v,w)\in\dvec{E}$ with $w\neq v_0$. Let $\varphi$ be randomly sampled from $\mu_{U,G,v_0}$. The conditional density of $\varphi_w$ given $(\varphi_u)_{u\in V\setminus\{w\}}$ equals
\begin{equation*}
  \frac{\exp(-\sum_{u: \{u,w\}\in E} U(\varphi_u - \varphi_w))}{\int_{-\infty}^{\infty} \exp(-\sum_{u: \{u,w\}\in E} U(\varphi_u - x))dx}
\end{equation*}
Thus, the lemma will follow by showing that
\begin{equation*}
  \int_{(-\infty, -(1-\varepsilon)]\cup[1-\varepsilon, \infty)} e^{-\sum_{u: \{u,w\}\in E} U(\varphi_u - (\varphi_v +t))}dt \cdot\mathbf{1}_{U_{\vec{e}}}
  \le \delta(U,\vec{e},\varepsilon)\int_{-\infty}^{\infty} e^{-\sum_{u: \{u,w\}\in E} U(\varphi_u - (\varphi_v + t))}dt.
\end{equation*}
Taking into account the Lipschitz assumption~\eqref{eq:Lipschitz_condition_surfaces}, we see it suffices to prove the pair of inequalities,
\begin{align}
  &\int_{-1}^{-(1-\varepsilon)}e^{-\sum_{u: \{u,w\}\in E} U(\varphi_u - (\varphi_v +t))}dt \cdot\mathbf{1}_{U_{\vec{e}}}
  \le \delta(U,\vec{e},\varepsilon)\int_{-(3/4-\varepsilon)}^{-1/2}e^{-\sum_{u: \{u,w\}\in E} U(\varphi_u - (\varphi_v + t))}dt,\label{eq:cond_w_first}\\
  &\int_{1-\varepsilon}^{1}e^{-\sum_{u: \{u,w\}\in E} U(\varphi_u - (\varphi_v +t))}dt \cdot\mathbf{1}_{U_{\vec{e}}}
  \le \delta(U,\vec{e},\varepsilon)\int_{1/2}^{3/4-\varepsilon}e^{-\sum_{u: \{u,w\}\in E} U(\varphi_u - (\varphi_v + t))}dt.\label{eq:cond_w_second}
\end{align}
We prove only inequality~\eqref{eq:cond_w_second} as inequality~\eqref{eq:cond_w_first} follows from it by applying a global sign change to $\varphi$. We assume that $U_{\vec{e}}$ holds as \eqref{eq:cond_w_second} is trivially verified otherwise. In addition, we assume that
\begin{equation}\label{eq:u_cond_below}
\min\{\varphi_u\,:\, \{u,w\}\in E\}\ge \varphi_v - \varepsilon
\end{equation}
as otherwise, using the Lipschitz assumption~\eqref{eq:Lipschitz_condition_surfaces}, the left-hand side of \eqref{eq:cond_w_second} is zero, again verifying \eqref{eq:cond_w_second} trivially. We proceed to estimate separately the two integrals in \eqref{eq:cond_w_second}. First,
using the assumption~\eqref{eq:monotonicity_condition_surfaces} that $U$ is non-decreasing on $[0,\infty)$,
\begin{equation}\label{eq:left_hand_side_second_integral}
  \int_{1-\varepsilon}^{1} \exp\bigg(-\sum_{u:\{u,w\}\in E} U(\varphi_u - (\varphi_v+t))\bigg)dt \le
  \varepsilon \exp(-U(1-\varepsilon)-(\deg(w)-1)U(0)).
\end{equation}
Second, observe that $\max\limits_{u:\{u,w\}\in E}|\varphi_u-\varphi_v|\leq 1\frac{1}{4}
$ as $\UL_{\vec{e}}$ holds and therefore, using also~\eqref{eq:u_cond_below},
\[
\varphi_w -\varphi_v\in \left[1/2, 3/4-\varepsilon\right] \text{ implies that }\max\limits_{u:\{u,w\}\in E}|\varphi_u-\varphi_w|\le \frac{3}{4} .
\]
Using again the non-decreasing assumption~\eqref{eq:monotonicity_condition_surfaces} and the assumption that $\varepsilon \le \frac{1}{8}$, we obtain
\begin{equation}\label{eq:right_hand_side_second_integral}
\int_{1/2}^{3/4-\varepsilon} \exp\bigg(-\sum_{u:\{u,w\}\in E} U(\varphi_u -
  (\varphi_v+t))\bigg)dt\ge
  \frac{1}{8}\exp\left(-\deg(w)U\left(\frac{3}{4}\right)\right).
\end{equation}
Plugging the inequalities~\eqref{eq:left_hand_side_second_integral} and \eqref{eq:right_hand_side_second_integral} into \eqref{eq:cond_w_second} and comparing with the definition of $\delta(U,\vec{e},\varepsilon)$ verifies the inequality~\eqref{eq:cond_w_second} and finishes the proof of the lemma.

\section{Discussion and open questions}\label{sec:discussion_and_open_questions}
{\bf Extremal gradients.} Theorem~\ref{main_thm} provides quantitative estimates on the rarity of extremal gradients in random surfaces satisfying a Lipschitz constraint and having a monotone interaction potential. Its proof makes use of a cluster algorithm for random surfaces and thus provides an alternative to a previous approach via reflection positivity~\cite[Theorem~3.2]{milos2015delocalization}. The main advantage of the cluster algorithm approach is that it applies to random surfaces defined on general graphs and thus removes a chief limitation of the reflection positivity proof which is restricted to torus graphs. However, the proof presented in this paper introduces new limitations which it would be desirable to remove. Specifically, the current proof applies only to Lipschitz surfaces with monotone interaction potential whereas one may expect, as put forth explicitly in \cite[Section 6]{milos2015delocalization}, that results analogous to Theorem~\ref{main_thm} should hold almost without restriction on the potential function (some integrability conditions are required for the model to be well defined) and such a result indeed holds on torus graphs as shown with the reflection positivity method \cite[Theorem~3.2]{milos2015delocalization}.

Theorem~\ref{main_thm} may be used together with the arguments of \cite{milos2015delocalization} to prove the delocalization of random surfaces in cases not previously known. For instance, delocalization would follow for random surfaces whose potential satisfies \eqref{eq:monotonicity_condition_surfaces} and
\eqref{eq:Lipschitz_condition_surfaces} (such as the hammock potential~\eqref{eq:hammock_potential}) on finite connected domains of $\mathbb{Z}^2$ with Dirichlet boundary conditions (when $V_0$ is the set of boundary vertices of the domain), or any other choice of the non-empty normalization set $V_0$. The same techniques should apply to show delocalization on many (finite domains in) infinite graphs on which simple random walk is recurrent.

{\bf Extremal gradients in spin systems.} Similarly to the previous point, it is also of interest to extend the control of extremal gradients to the spin system setting. Bricmont and Fontaine \cite{bricmont1981correlation} show that extremal gradients are unlikely in the XY (spin $O(2)$) model (allowing even for multi-body interactions and external magnetic fields). Their proof makes use of Ginibre's extension of Griffiths' inequalities \cite{Gri67, Gin70} and thus does not extend to the spin $O(n)$ model with $n\ge 3$, where they obtain somewhat weaker results instead. As cluster algorithms are available in some generality for spin $O(n)$ models (as reviewed in Section~\ref{sec:reflection_transformations_cluster_algorithms}), it is possible that our approach to the control of extremal gradients may be extended to the spin system setting and provide additional results for models in which Ginibre's inequality is unavailable.

{\bf Excursion-set percolation.}
The reflection principle for random surfaces given in Theorem~\ref{thm:reflection_principle} may remind the reader of the study of excursion-set
percolation in random surfaces. Initiated by Lebowitz-Saleur \cite{lebowitz1986percolation}
and Bricmont-Lebowitz-Maes \cite{bricmont1987percolation}, this line of investigation focuses
on the percolative properties of the set $\{v\in V\,:\, \varphi_v\ge
h\}$. Triggered by its relations with the random interlacement model
\cite{sznitman2010vacant, rodriguez2013phase} introduced by Sznitman, the subject has recently seen significant activity; see \cite{Drewitz2017}
and references within. In these studies, one starts with the infinite-volume limit of a random surface $\varphi$ on $\mathbb{Z}^d$, typically the Gaussian free field with Dirichlet boundary conditions (see \cite{bricmont1987percolation, rodriguez2016decoupling} for an exception), and aims to study the set of $h\in\mathbb{R}$ for which there is, almost surely, percolation in the set of vertices $v$ with $\varphi_v\ge h$ (that is, there is an infinite connected component of vertices $v$ with $\varphi_v\ge h$). For the Gaussian free field, it is known \cite{bricmont1987percolation, rodriguez2013phase} that for each $d\ge 3$ there is an $h_*(d)\in\mathbb{R}$ such that percolation occurs if $h<h_*(d)$ and does not occur if $h>h_*(d)$. Moreover, $h_*(d)>0$ in any dimension $d\ge 3$; for high dimensions this was shown in \cite{rodriguez2013phase} and was strengthened to every $d\ge 3$ in the very recent \cite{Drewitz2017}.

Theorem~\ref{thm:reflection_principle} seems far from the state-of-the-art of these studies but does provide an alternative approach to one of the basic, simple, results in this direction. It was shown in \cite{bricmont1987percolation} that for any strictly convex potential $U$ and any $\varepsilon>0$, in the infinite-volume limit on $\mathbb{Z}^d$ of the random surface measure with Dirichlet boundary conditions, the set of vertices $x$ with $\varphi_x\ge -\varepsilon$ percolates almost surely. We discuss this result in the context of Theorem~\ref{thm:reflection_principle}. Let $G=(V,E)$ be a finite connected graph, let $V_0\subseteq V$ be non-empty, let $U$ be a potential satisfying the monotonicity
condition \eqref{eq:monotonicity_condition_surfaces} and the
assumption \eqref{eq:partition_function_condition} that
$\mu_{U,G,V_0}$ is well-defined and let $\varphi$ be randomly sampled
from $\mu_{U,G,V_{0}}$. Then, for any $\varepsilon>0$ and $v\in V\setminus V_0$, by \eqref{eq:barrier_ineq},
\begin{equation}\label{eq:excursion-set_percolation1}
  \mathbb{P}(V_0\xleftrightarrow{\varphi> -\varepsilon} v) = \mathbb{P}(V_0\xleftrightarrow{\varphi< \varepsilon} v) = 1 - \mathbb{P}(V_0\centernot{\xleftrightarrow{\varphi< \varepsilon}} v) \ge \mathbb{P}(|\varphi_v|< \varepsilon).
\end{equation}
If, in addition, $U$ satisfies the finite-support condition
\eqref{eq:Lipschitz_condition_surfaces} then, relying now on \eqref{eq:Lipschitz_barrier_ineq}, we obtain an inequality in the opposite direction,
\begin{equation}\label{eq:excursion-set_percolation2}
  \mathbb{P}(V_0\xleftrightarrow{\varphi> -\varepsilon} v) \le \mathbb{P}(\varphi_v\in (\varepsilon, \varepsilon+1)).
\end{equation}
When the potential $U$ is strictly convex, one has available the Brascamp-Lieb inequality \cite{brascamp1976extensions} which bounds the variance of $\varphi_x$ by the variance of the Gaussian free field on the same graph. Together with \eqref{eq:excursion-set_percolation1} this can be used to show that the probability that in a discrete cube graph $\{-L,\ldots, L\}^d$ in $\mathbb{Z}^d$, $d\ge 3$, the origin is connected to the boundary of the cube via vertices $v$ with $\varphi_v\ge-\varepsilon$ is uniformly positive as $L$ tends to infinity. Conversely, the same probability in dimension $d=2$ necessarily tends to zero as $L$ increases when $U$ satisfies the finite-support condition and is twice-continuously differentiable on its support, by \eqref{eq:excursion-set_percolation2} and the delocalization results of \cite{milos2015delocalization}.

{\bf Correlation of gradients.} One approach to bounding the fluctuations of random surfaces proceeds via control of the correlations of gradients of the surface. Consider a random surface $\varphi$ sampled from the measure $\mu_{U, G, V_0}$ (see \eqref{eq:random_surface_measure}) with $G=(V,E)$ a finite, connected graph and $V_0\subseteq V$ a non-empty set on which $\varphi$ is set to zero. One may then express the height of the surface at some vertex $x\notin V_0$ as a linear combination of the gradients of the surface. That is, if $\dvec{E}$ denotes the set of oriented edges of $G$ (both orientations of each edge appear in $\dvec{E}$) one writes
\begin{equation*}
  \varphi_{x} = \sum_{(u,v)\in\dvec{E}} c_{(u,v)}(\varphi_u - \varphi_v)
\end{equation*}
for a suitable choice of coefficients $c_{(u,v)}$. Among the many possible choices of these coefficients, Brascamp, Lieb and Lebowitz \cite[Section VII]{brascamp1975statistical} consider the one obtained from the Green's function $g$ of the graph $G$ by writing
\begin{equation}\label{eq:value_at_x_via_Green's function}
  \varphi_{x} = \frac{1}{2}\sum_{(u,v)\in\dvec{E}} (g_u - g_v)(\varphi_u - \varphi_v)
\end{equation}
where $g_y$ is the expected number of visits to $x$ of a simple random walk on $G$ started at $y$ and stopped when it first hits $V_0$ (the factor $\frac{1}{2}$ is needed since each edge is taken with both orientations). The equality \eqref{eq:value_at_x_via_Green's function} is a consequence of the discrete Green's identity $\sum_{\{u,v\}\in E} (a_u - a_v)(b_u-b_v) = -\sum_{u\in V} a_u(\Delta b)_u$, valid for any two functions $a,b$ on $G$ (with $(\Delta b)_u := \sum_{v\in V\,\colon\,\{u,v\}\in E} (b_v - b_u)$), and the fact that $(\Delta g)_y = -\delta_{x,y}$. The identity \eqref{eq:value_at_x_via_Green's function} shows that
\begin{equation}\label{eq:variance_via_gradient_covariances}
  \operatorname{Var}(\varphi_x) = \sum_{\{u,v\}\in E} (g_u - g_v)^2 \operatorname{Var}(\varphi_u - \varphi_v) + \frac{1}{4}\sum_{\substack{(u,v), (z,w)\in\dvec{E}\\\{u,v\}\neq\{z,w\}}} (g_u - g_v)(g_z - g_w)\Cov(\varphi_u - \varphi_v, \varphi_z - \varphi_w)
\end{equation}
and thus highlights how bounding the covariances $\Cov(\varphi_u - \varphi_v, \varphi_z - \varphi_w)$ implies an upper bound on the fluctuations of the surface. Indeed, as pointed out in \cite{brascamp1975statistical}, when $G$ is a discrete cube $\{-L, \ldots, L\}^d$ in the lattice $\mathbb{Z}^d$, $d\ge 3$, and $V_0$ is the boundary of this cube, a decay of the covariances faster than $\|u-z\|_2^{-(2+\varepsilon)}$ for large $\|u-z\|$ would imply that the fluctuations of the surface at the origin remain bounded uniformly in $L$. Such decay is expected but seems quite difficult to establish. Recently, Conlon and Fahim \cite{conlon2015long} used PDE tools to establish asymptotic formulas for the covariance which imply such decay when the potential function $U$ satisfies $0<\inf U''(x) \le \sup U''(x)<\infty$ and certain additional assumptions.

The above discussion provides motivation for studying the gradient-gradient covariances of random surfaces as appear in \eqref{eq:variance_via_gradient_covariances}. It is interesting whether the cluster algorithms used in this work (see Section~\ref{sec:reflection_transformations_cluster_algorithms}) can provide additional tools for controlling such covariances. Indeed, one may try to
reflect the value of $\varphi$ at $v$ around the height
$\varphi_u$, thus reversing the gradient on the edge $(u, v)$. The contribution to the covariance on the event that this reflection leaves the gradient on the edge $(z,w)$ unchanged is exactly zero as the reflection is a measure preserving one-to-one mapping. Thus, this approach connects the problem of estimating the gradient-gradient covariances to the problem of controlling properties of the reflected cluster (of $v$, when reflecting around the height $\varphi_u$) in the cluster algorithm. Related connections in the spin system settings were found by Chayes \cite{chayes1998discontinuity} and Campbell-Chayes \cite{campbell1998isotropic}. It is unclear whether this connection can simplify the problem.

\medskip
{\bf Acknowledgements.} We thank Martin Tassy for helping to clarify the connection between Sheffield's cluster-swapping algorithm and the general cluster algorithms discussed in Section~\ref{sec:reflection_transformations_cluster_algorithms}. We also thank Nishant Chandgotia and Yinon Spinka for several useful discussions.

\bibliographystyle{amsplain}
\bibliography{all}
\end{document}